\newcommand{\insertfig}[3]{
\begin{figure}[!tb]\centering
\includegraphics[width=#1\columnwidth]{#2.eps}
\caption{\em #3}\label{#2}\end{figure}}
\newtheorem{intexp}{Integral}[subsection]
\newtheorem{prop}{Property}[subsection]
\newtheorem{theorem}{Theorem}[section]
\newtheorem{remark}{Remark}[section]
\def\ben{\begin{enumerate}}
\def\beq{\begin{equation}}
\def\beqa{\begin{eqnarray}}
\def\bit{\begin{itemize}}
\def\een{\end{enumerate}}
\def\eeq{\end{equation}}
\def\eeqa{\end{eqnarray}}
\def\eit{\end{itemize}}
\def\dst{\displaystyle}
\def\non{\nonumber\\}
\DeclareMathAlphabet{\mathsfbf}{OT1}{cmss}{sbc}{n}
\def\Ac{\check{\Am}_1}
\def\Am{\mathbf{A}}
\def\asol{\check{\alpha}}
\def\Bc{\check{\Bm}_1}
\def\Bm{\mathbf{B}}
\def\bsol{\check{\beta}}
\def\bt{\tilde{b}}
\def\bv{\mathbf{b}}
\def\bvt{\widetilde{\bv}}
\def\C{\mathbb{C}}
\def\Cc{\check{\Cm}_1}
\def\Cm{\mathbf{C}}
\def\Dc{\check{\Dm}_1}
\def\Dm{\mathbf{D}}
\def\dt{\tilde{d}}
\def\dv{\mathbf{d}}
\def\dvt{\widetilde{\dv}}
\def\E{\mathsfbf{E}}
\def\etr{\mathrm{etr}}
\def\F{\mathcal{F}}
\def\gsol{\check{\gamma}}
\def\Gt{\tilde{G}}
\def\gt{\tilde{g}}
\def\H{\mathsfbf{H}}
\def\Hb{\bar{\mathbf{H}}}
\def\Hbh{\widehat{\Hm}}
\def\Hbt{\widetilde{\Hm}}
\def\Hm{\mathbf{H}}
\def\Id{\mathbf{I}}
\def\j{\mathrm{j}\,}
\def\Lam{\boldsymbol{\Lambda}}
\def\muv{\boldsymbol{\mu}}
\def\Mm{\mathbf{M}}
\def\N{\mathcal{N}}
\def\Nm{\mathbf{N}}
\def\nr{{n_{\mathsf{R}}}}
\def\nt{{n_{\mathsf{T}}}}
\def\psh{\hat\psi}
\def\Pm{\mathbf{P}}
\def\Qb{\bar{\mathbf{Q}}}
\def\Qm{\mathbf{Q}}
\def\R{\mathbb{R}}
\def\Rm{\mathbf{R}}
\def\T{\mathsfbf{T}}
\def\Tm{\mathbf{T}}
\def\Tmt{\widetilde{\Tm}}
\def\tr{\mathop{\mathsfbf{Tr}}}
\def\Um{\mathbf{U}}
\def\Umc{\check{\Um}}
\def\uv{\mathbf{u}}
\def\V{\mathsfbf{Var}}
\def\Vm{\mathbf{V}}
\def\Vmc{\check{\Vm}}
\def\vv{\mathbf{v}}
\def\vec{\mathrm{vec}}
\def\w{\check{w}}
\def\Wm{\mathbf{W}}
\def\Wmt{\widetilde{\Wm}}
\def\x{\otimes}
\def\Xm{\mathbf{X}}
\def\xv{\mathbf{x}}
\def\Ym{\mathbf{Y}}
\def\yv{\mathbf{y}}
\def\z{\check{z}}
\def\Zm{\mathbf{Z}}
\def\Zmt{\widetilde{\Zm}}
\def\0{\mathbf{0}}
\def\zv{\mathbf{z}}
\begin{document}
\title{Asymptotic Mutual Information Statistics of Separately-Correlated
Rician Fading MIMO Channels$^*$}
\author{Giorgio Taricco}
\date{19 December 2007}
\maketitle

\def\thefootnote{$*$}
\footnotetext{
This work is supported by the STREP project No.\ IST-026905 (MASCOT) within the
sixth framework programme of the European Commission.
These results have been presented in part at the IEEE GLOBECOM 2006,
Communication Theory Symposium.}
\def\thefootnote{\arabic{footnote}}

\begin{abstract}
Precise characterization of the mutual information of MIMO systems is required
to assess the throughput of wireless communication channels in the presence of
Rician fading and spatial correlation.
Here, we present an asymptotic approach allowing to approximate the distribution
of the mutual information as a Gaussian distribution in order to provide both
the average achievable rate and the outage probability.
More precisely, the mean and variance of the mutual information of the
separately-correlated {\em Rician} fading MIMO channel are derived when the
number of transmit and receive antennas grows asymptotically large and their
ratio approaches a finite constant.
The derivation is based on the {\em replica method}, an asymptotic technique
widely used in theoretical physics and, more recently, in the performance
analysis of communication (CDMA and MIMO) systems.
The replica method allows to analyze very difficult system cases in a
comparatively simple way though some authors pointed out that its assumptions
are not always rigorous.
Being aware of this, we underline the key assumptions made in this setting,
quite similar to the assumptions made in the technical literature using the
replica method in their asymptotic analyses.
As far as concerns the convergence of the mutual information to the Gaussian
distribution, it is shown that it holds under some mild technical conditions,
which are tantamount to assuming that the spatial correlation structure has no
asymptotically dominant eigenmodes.
The accuracy of the asymptotic approach is assessed by providing a sizeable
number of numerical results.
It is shown that the approximation is very accurate in a wide variety of system
settings even when the number of transmit and receive antennas is as small as a
few units.
\end{abstract}

\newpage\section{Introduction}
\markboth{revised for the ieee transactions on information theory}{}

During the last decade, multiple-input multiple-output (MIMO) wireless systems
attracted much interest because of the prediction of outstanding capacity gains
with respect to corresponding single-input single-output (SISO)
systems~\cite{winters,foschini96,telatar}.

Early work in this area was mostly based on the assumption of independent and
identical distributed (iid) Rayleigh fading paths justified by the {\em rich
scattering} assumption~\cite{foschini96}.

However, experimental and theoretical results showed that more realistic MIMO
channel models are required to obtain more accurate results, which can
dramatically impact the potential capacity gains~\cite{gesb02,gold03,sayeed02}.
Thus, more realistic channel models have been proposed in the literature
accounting for both multipath correlation and the presence of a line-of-sight
(LOS) component (Rician fading).

In this work we focus on the separately-correlated MIMO Rician fading channel
whose channel matrix can be written as
\[
  \Hm=\Hb+\Rm^{1/2}\Hm_w\Tm^{1/2},
\]
where $\Hb$ is a constant matrix accounting for the LOS component, $\Rm,\Tm$ are
the nonnegative definite receive and transmit correlation matrices,
respectively, and $\Hm_w$ is a matrix of iid circularly-symmetric zero-mean
complex Gaussian random variables.
We assume that the channel parameters $(\Hb,\Rm,\Tm)$ are unknown at the
transmitter whereas the receiver knows the channel matrix realization exactly
(i.e., following the terminology introduced by Goldsmith {\em et al.}~\cite{gold03},
there is perfect channel-state information at the receiver (CSIR) and no channel
distribution information at the transmitter (CDIT)).

In the MIMO Rician fading channel literature, many works focus on the
{\em spatially uncorrelated} channel.
In particular, H\"osli {\em et al.}~\cite{hoekimlap05} show that the mutual
information is monotonically non-decreasing in the singular values of $\Hb$.
Jayaweera and Poor~\cite{jaypoor05} derive the exact ergodic capacity
when only the Rice factor $K$ known at the transmitter by using the joint
eigenvalue distribution of the noncentral Wishart matrix, and lower and upper
bounds to the ergodic capacity when full CDIT is available.
Hansen and B\"olcskei~\cite{hanboe04} investigate the case of high SNR with
unit--rank average channel matrix.
Venkatesan {\em et al.}\cite{venka03} derive the capacity-achieving covariance
matrix of the channel, generalizing previous results relevant to the case of
unit rank $\Hb$~\cite{jagold04}.
Kang and Alouini~\cite{kangalou06} obtain the exact ergodic capacity by
calculating the determinant of a matrix whose entries contain confluent
hypergeometric functions of the second kind and the exact capacity variance
in terms of Meijer's G-functions.

Many recent works propose upper and lower bounds to the capacity of the
separately correlated Rician fading MIMO channel.
McKay and Collings~\cite{mckay1,mckay2} derive upper and lower bounds to the
ergodic mutual information, based on the method of zonal polynomials, which are
asymptotically tight as the SNR grows to infinity.
They also provide an asymptotic (in the SNR) expression of the mutual
information variance.
The bounds have been further refined in~\cite{mckay3}.
In the same context, Cui {\em et al.}~\cite{cui05} obtain similar upper and
lower bounds to the ergodic capacity by using a determinant expansion in terms
of minors whose moments are calculated by use of hypergeometric matrix
functions.
In a recent paper, Lebrun {\em et al.}\cite{lebfashasmi06} use asymptotic
bounds to confirm that the mean and variance of the MIMO channel capacity
approach the values of the corresponding underlying scattering channel when
the number of antennas is large and the average channel matrix has unit rank.

Another way to circumvent the difficulties related to the derivation of an exact
expression of the capacity consists of assuming an asymptotically large number
of transmit and receive antennas.

In this contest, Moustakas {\em et al.}~\cite{mousimsen03} show that the
capacity of the correlated Rayleigh fading MIMO channel converges to a Gaussian
random variables whose mean and variance are calculated in closed form.
Their derivation is based on approximating the {\em moment generating function}
(MGF) of the capacity by using the {\em replica method}, a technique originally
introduced in the context of statistical physics and successfully applied to
several communications problems (see,
e.g.,~\cite{muller03,guoverdu05,tanaka02,wenwong06}).

A considerable number of research works in this area are based on the use of
Stieltjes Transform and lead to results compatible to those presented here.
Among them, Dumont {\em et al.}~\cite{dum05a} derived the asymptotic mutual
information mean in the case of single-sided (receiver side)
separately-correlated Rician fading.
Hachem {\em et al.}~\cite{haclounaj06,haclounaj07} derive the asymptotic mutual
information mean in the case of uncorrelated (but not iid) Rician fading.
Hachem {\em et al.}~\cite{hac_pastur06} derive the asymptotic mutual information
mean and variance in the case of separately-correlated {\em Rayleigh} fading and
consider the convergence of the random mutual information to the Gaussian
distribution.

Another approach to show that the asymptotic capacity distribution converges to
the Gaussian distribution is based on the study of the channel matrix singular
values.
Focusing on the correlated Rayleigh fading MIMO channel, Martin and
Ottersten~\cite{martott04} show that, subject to certain conditions on the
correlation between the channel elements, the limiting capacity distribution is
Gaussian.

Recent results in this area are aimed at extending the scenario to the MIMO
communication channel with interference.
This direction was taken originally by Moustakas {\em et al.}
in~\cite{mousimsen03} for the Rayleigh fading channel.
More recently,
Riegler and Taricco~\cite{RiegTar07a} derived the second-order statistics of
the separately-correlated Rician fading MIMO channel mutual information in the
presence of interference.
This work was based on the use of Grassman variables and supermatrix theory
(see references therein).

A further extension of the results presented by this paper is the
derivation of the ergodic capacity achieving covariance matrix.
In this contest, the following works are worth mentioning.
\bit
\item
Dumont {\em et al.} provide in~\cite{dum06} an asymptotic method (based on the
Stieltjes transform) to derive the ergodic capacity achieving covariance matrix
for the separately correlated Rician fading MIMO channel.
The key ingredient is the asymptotic ergodic mutual information formula, which
the authors refer to a preliminary version of~\cite{haclounaj07}, and is
equivalent to the expression derived in this paper (by using a the replica
method instead of Stieltjes transforms).
The authors compare their results against those obtained by Vu and Paulraj, who
used an interior point with barrier optimization method, and showed that the
asymptotic method has a considerable advantage in terms of time efficiency.
\item
Taricco and Riegler provide in~\cite{TarRieg07} an algorithm for the derivation
of the ergodic capacity achieving covariance matrix for the separately
correlated Rician fading MIMO channel {\em with interference}.
This work uses an asymptotic approximation of the ergodic mutual information for
a separately-correlated Rician fading MIMO channel in the presence of
interference.
\eit

Finally, it is worth mentioning the recent result obtained by Riegler and
Taricco~\cite{RiegTar07b} in the area of multiuser MIMO communications.
This result consists in the derivation of the ergodic sum-rate achieving
covariance matrices of a separately correlated Rician fading multiple access
MIMO channel and of the resulting ergodic capacity region.
The paper uses the replica method to derive the ergodic mutual information
expressions required to upper bound the partial sum rates of the multiuser
system.

Summarizing, the main contributions of this paper are listed as follows.
\bit
\item
We consider an asymptotic setting where the number of transmit and receive
antennas grows to infinity, while their ratio approaches a finite value.
A closed-form expression of the asymptotic mean and variance of the mutual
information of the separately-correlated MIMO {\em Rician} fading channel for
arbitrary input signal covariance matrix is derived.
These results depend on a pair of scalar parameters obtained by solving two
polynomial equations.
One of the main technical difficulties encountered, in addition to those
addressed in~\cite{mousimsen03}, is in the need to deal with $2\times2$ block
matrices in the saddlepoint approximation which do not split nicely as in the
Rayleigh fading case~\cite{mousimsen03}.
\item
A detailed study of the cumulant generating function of the mutual information
in the asymptotic regime leads to prove that it converges in distribution to a
Gaussian random variable, provided that the correlation matrices $\Tm$ and $\Rm$
satisfy a mild technical condition.
This condition is tantamount to requiring that the channel has no asymptotically
dominant eigenmode.
\item
From a theoretical point of view, the key issues of uniqueness of the
fixed-point equation solution and of reality of the variance expression are
addressed in two appendices.
The latter result is important also because it allows to say that the
replica-symmetric saddlepoint is stable.
\item
To illustrate the analytic applications of the results obtained, the spatially
uncorrelated Rician fading MIMO channel is considered and its ergodic capacity
achieving covariance matrix is derived by using the asymptotic approximation.
The results obtained are compared to the existing nonasymptotic literature.
It can be noticed that the eigenvectors of the optimum covariance matrix
can be obtained very easily from the average channel matrix (key result
from~\cite{jagold04}).
Moreover, also the eigenvalues are derived --- in the asymptotic setup --- by
using a {\em fixed-point water-filling} algorithm that is much simpler than
the numerical techniques proposed in the literature to solve similar
problems~\cite{jorsboc} (in the nonasymptotic case).
\item
Finally, a set of numerical results is presented to validate the asymptotic
analytic method.
These results depart from a basic system scenario and change one of the
following parameters at a time: number of antennas, antenna ratio, spatial
correlation (summarized in the base of the exponential correlation considered,
common for the transmit and receive sides), Rice factor, and SNR.
As far as concerns the ergodic capacity, it is shown that the relative error
between the asymptotic and exact result (the latter derived by extensive
Monte-Carlo simulation) is always lower than $1\%$, even when the number of
antennas is as low as one or two.
The mutual information standard deviation asymptotic results are less accurate,
with a relative error rising to a few percent units.
However, using the asymptotic Gaussianity to derive an approximate outage mutual
information, the standard deviation error is shown to have a negligible impact
on this channel metric, yielding a relative error lower than $1\%$ in the cases
considered.
\eit
\subsection{Notation and basic results}

We denote (column-) vectors and matrices by lowercase and uppercase boldface
characters, respectively.
The imaginary unit is $\j=\sqrt{-1}$.
The $a$th element of a vector $\xv$ is $(\xv)_a$.
The $(a,b)$th element of a matrix $\Am$ is $(\Am)_{ab}$.
$\delta_{ab}=1$ if $a=b$ and $0$ if $a\ne b$.
The transpose of a matrix $\Am$ is $\Am^\T$.
The Hermitian transpose of a matrix $\Am$ is $\Am^\H$.
The Hermitian part of a square matrix $\Am$ is defined as
$\H(\Am) \triangleq \frac{1}{2}(\Am+\Am^\H)$.
The trace of a matrix is $\tr(\Am)=\sum_a(\Am)_{aa}$.
The exponential of the trace of a matrix is
$\etr(\Am)\triangleq\exp(\tr(\Am))$.
The minimum and maximum eigenvalues of a matrix $\Am$ are
$\lambda_{\min}(\Am)$ and $\lambda_{\max}(\Am)$, respectively.
The minimum and maximum singular values of a matrix $\Am$ are
$\sigma_{\min}(\Am)$ and $\sigma_{\max}(\Am)$, respectively.
The Frobenius norm of a matrix $\Am$ is $\|\Am\|$;
its square can be written as $\|\Am\|^2=\tr(\Am\Am^\H)$.
The Kronecker product of two matrices $\Am$ and $\Bm$ is $\Am\x\Bm$;
If $\Am\in\C^{m_1\times n_1}$ and $\Bm\in\C^{m_2\times n_2}$, then
$\Am\x\Bm\in\C^{m_1m_2\times n_1n_2}$ and we can write it in block matrix form
as $[(\Am)_{ab}\Bm]$ for $a=1,\ldots,m_1$ and $b=1,\ldots,n_1$.
Among the properties of the Kronecker product we recall the
following~\cite[p.\ 475]{hornjohnson}:
$i)~\Am\x(\Bm\x\Cm)=(\Am\x\Bm)\x\Cm$;
$ii)~\prod_k(\Am_k\x\Bm_k)=(\prod_k\Am_k)\x(\prod_k\Bm_k)$;
$iii)~(\Am\x\Bm)^\H=\Am^\H\x\Bm^\H$;
$iv)~(\Am\x\Bm)^{-1}=\Am^{-1}\x\Bm^{-1}$;
$v)~\tr(\Am\x\Bm)=\tr(\Am)\tr(\Bm)$;
$vi)~\det(\Am\x\Bm)=\det(\Am)^n\det(\Bm)^m$ if $\Am\in\C^{m\times m}$
and $\Bm\in\C^{n\times n}$.
$\vec(\Am)$ is the column vector obtained by stacking the columns of $\Am$
on top of each other from left to right.
$\Am^{1/2}$ is the {\em matrix square-root} of the Hermitian nonnegative
definite matrix $\Am$ and is defined as $\Um\Lam^{1/2}\Um^\H$ where
$\Am=\Um\Lam\Um^\H$ is the unitary factorization of
$\Am$~\cite[p.\ 414]{hornjohnson}.
The notation $\xv\sim\N_c(\muv,\Rm)$ defines a vector of complex jointly
circularly-symmetric Gaussian random variables with mean value $\muv=\E[\xv]$ and
covariance matrix $\Rm=\E[\xv\xv^\H]-\muv\muv^\H$ and its joint probability
density function (pdf) is given by
\[
  f(\xv) = \det(\pi\Rm)^{-1}\exp[-(\xv-\muv)^\H\Rm^{-1}(\xv-\muv)]~.
\]
$\E[X]$ and $\V[X]$ are the mean and the variance of the random variable $X$,
respectively.
$(x)_+\triangleq\max\{0,x\}$.
Given the functions $f(x)$ and $g(x)$, we have, for $x\to\infty$,
$f(x)=O(g(x))$ if $\limsup_{x\to\infty}|f(x)/g(x)|<\infty$ and
$f(x)=\Theta(g(x))$ if
\[
  0<\liminf_{x\to\infty}\bigg|\frac{f(x)}{g(x)}\bigg|
  \le\limsup_{x\to\infty}\bigg|\frac{f(x)}{g(x)}\bigg|
  <\infty.
\]

\section{MIMO channel model}

We consider a narrowband block fading channel with $\nt$ transmit and $\nr$
receive antennas characterized by the following equation:
\beq\label{ch.eq}
  \yv = \Hm\xv + \zv.
\eeq
Here,
$\xv\in\C^{\nt \times 1}$ is the transmitted signal vector,
$\Hm\in\C^{\nr \times \nt}$ is the channel matrix,
$\zv\in\C^{\nr \times 1}$ is the additive noise vector, and
$\yv\in\C^{\nr \times 1}$ is the received signal vector.

We assume that the additive noise vector contains iid entries
$(\zv)_a\sim\N_c(0,1)$ for $a=1,\ldots,\nr$.

The channel matrix models separately (or Kronecker) correlated Rician fading
so that it can be written as
\[
  \Hm = \Hb+\Rm^{1/2}\Hm_w\Tm^{1/2}
\]
where $\Hb$ represents the mean value and is related to the presence of a
line-of-sight signal component in the multipath fading channel, the Hermitian
nonnegative definite matrices $\Tm,\Rm$ are the transmit and receive correlation
matrices, and $(\Hm_w)_{ab}\sim\N_c(0,1)$ for $a=1,\ldots,\nr$ and
$b=1,\ldots,\nt$.
The covariance between different entries of $\Hm$ is
\beqa
  \mathop{\rm cov}((\Hm)_{ij},(\Hm)_{i'j'})
  &=&
  \sum_{k,\ell,k',\ell'}\E\Big[
  (\Rm^{1/2})_{ik}(\Hm_w)_{k\ell}(\Tm^{1/2})_{\ell j}
  (\Rm^{1/2})_{i'k'}^*(\Hm_w)_{k'\ell'}^*(\Tm^{1/2})_{\ell'j'}^*\Big]
  \non
  &=&
  \sum_{k,\ell} (\Rm^{1/2})_{ik}(\Tm^{1/2})_{\ell j}
  (\Rm^{1/2})_{i'k}^*(\Tm^{1/2})_{\ell j'}^*
  \non
  &=&
  (\Rm)_{ii'}(\Tm)_{jj'}^*.
  \nonumber
\eeqa

\subsection{Normalizations}

Assume that the input signal vector $\xv$ has zero mean%
\footnote{
There is no point in having a nonzero mean input since the signal power would be
greater but the mutual information would remain the same by the relationship
$I(\xv;\yv)=h(\xv)-h(\xv|\yv)=h(\xv-\muv_x)-h(\xv-\muv_x|\yv)$ where
$\muv_x=\E[\xv]$~\cite[Th.9.6.3]{cover}.}
and covariance matrix $\Qm=\E[\xv\xv^\H]$.
To simplify notation, we define
\[
  \Hbt \triangleq \Hb\Qm^{1/2}
  \qquad\text{and}\qquad
  \Tmt \triangleq \Qm^{1/2}\Tm\Qm^{1/2},
\]
so that the input signal covariance matrix is implicitly accounted for into
$\Hbt$ and $\Tmt$.

Then, the total received power is given by:%
\footnote{
We have:
$\E[(\Hm_w^\H\Rm\Hm_w)_{ij}]
=\sum_{k,\ell}\E[(\Hm_w)_{ki}^*(\Rm)_{k\ell}(\Hm_w)_{\ell j}]
=\tr(\Rm)\delta_{ij}$
}
\beqa\label{rx.power.eq}
  \E[\|\yv\|^2]
  &=&
  \E[\|\Hb\xv\|^2] + \E[\|\Rm^{1/2}\Hm_w\Tm^{1/2}\xv\|^2] + \E[\|\zv\|^2] \non
  &=&
  \tr[\Hb\Qm\Hb^\H]+\tr\{\E[\Hm_w^\H\Rm\Hm_w]\Tm^{1/2}\Qm\Tm^{1/2}\}+\tr[\Id_{\nr}]
  \non
  &=&
  \|\Hbt\|^2+\tr(\Rm)\tr(\Tmt)+\nr~.
\eeqa

According to eq.~\eqref{rx.power.eq}, the channel {\em Rice factor} (defined as
the ratio of the received direct to diffuse power~\cite{farro01}) and the SNR
are given by:
\beq\label{rho.K.constr}
  K = \frac{\|\Hbt\|^2}{\tr(\Rm)\tr(\Tmt)}
  \qquad\text{and}\qquad
  \rho = \frac{(K+1)\tr(\Tmt)\tr(\Rm)}{\nr} .
\eeq
In the special case of scalar input signal covariance matrix, $\Qm=q\Id_{\nt}$,
we have:
\beq\label{rho.K.constr1}
  K = \frac{\|\Hb\|^2}{\tr(\Rm)\tr(\Tm)}
  \qquad\text{and}\qquad
  \Qm = \frac{\rho}{K+1}\frac{\nr}{\tr(\Rm)\tr(\Tm)}\Id_{\nt}.
\eeq
\begin{remark}
The two definitions \eqref{rho.K.constr} and \eqref{rho.K.constr1} of the Rice
factor are different when the input signal covariance matrix is not proportional
to the identity matrix.
Definition \eqref{rho.K.constr} complies with the common knowledge that the
Rice factor represents the ratio between the line-of-sight and the scattered
received signal power.
It has the disadvantage of depending on the input signal covariance matrix,
which may become an issue when the channel capacity is investigated.
In that case, when a constraint on the total input signal power ($\tr(\Qm)$) is
set, one has to resort to definition \eqref{rho.K.constr1} even though it does
not meet the common meaning of the Rice factor.
\end{remark}

\subsection{Asymptotic setting}\label{asy.set.sec}

Here we define the asymptotic setting assumed for the derivation of the mutual
information mean and variance when $\nt,\nr\to\infty$ and $\nt/\nr\to\kappa$,
($0<\kappa<\infty$).

More precisely, we consider the sequence of parameters $\nr=1,2,\ldots$ and
$\nt=\lceil\kappa\nr\rceil$ and a corresponding sequence of deterministic
matrices $\Hb,\Tm,\Rm,\Qm$ and random matrices $\Hm_w$ of suitable dimensions.
Additionally,
\bit
\item
The Rice Factor $K$ and the SNR $\rho$ are constant as $\nr\to\infty$.

\item
The correlation matrices are normalized by
\beq\label{tr.constr}
  \tr(\Tm) = \nt \qquad\text{and}\qquad \tr(\Rm) = \nr.
\eeq

\item
The matrices $\Tmt$ and $\Hbt$ are normalized by
\[
  \tr(\Tmt) = \frac{\rho}{K+1}
  \qquad\text{and}\qquad
  \|\Hbt\|^2 = \frac{K\rho}{K+1}\nr.
\]
\eit

\section{Channel mutual information and cumulant generating function}

It is well known \cite{cover} that the mutual information of a MIMO channel with
channel matrix $\Hm$ and input signal covariance matrix $\Qm$ is given by
\[
  I(\Hm) = \ln\det(\Id_{\nr}+\Hm\Qm\Hm^\H)
  \qquad\text{nats/complex dimension.}
\]
The pdf of this random variable can be derived by its MGF, defined as
\[
  G(\nu) \triangleq \E[\exp(-\nu I(\Hm))] =
  \E\Big[\det(\Id_{\nr}+\Hm\Qm\Hm^\H)^{-\nu}\Big]~.
\]
Our goal is to derive an expression of the {\em cumulant generating
function} (CGF) of $I(\Hm)$, defined as $g(\nu)\triangleq\ln G(\nu)$.%
\footnote{
The cumulant generating function of a real Gaussian random variable with mean
$\mu$ and variance $\sigma^2$ is given by
$g(\nu)=-\nu\mu+\frac{1}{2}\nu^2\sigma^2$.}
This allows us to derive the mean and the variance of $I(\Hm)$ as follows:
\[
  \E[I(\xv;\yv)] = -g'(0^+)
  \qquad\text{and}\qquad
  \V[I(\xv;\yv)] = g''(0^+).
\]
In the following, we resort to the {\em replica method} to obtain an asymptotic
expansion of $g(\nu)$.

\subsection{The replica method}

The replica method was originally used in the study of {\em spin
glasses}~\cite{edwand75}.
Later, it found application in other research areas, such as neural networks,
coding, image processing, and communications~\cite{nishi2001}.

Many research works describe the physical meaning of the replica method (see,
e.g.,~\cite{nishi2001,tanaka02,guoverdu05,muller03,wenwong06}).
Here, we confine ourselves to a short review of its basic assumptions.

According to \cite{nishi2001}, the replica method applies to a sequence of
random variables $Z_N$ (typically representing the {\em partition function} of a
physical system), converging in distribution to some $Z_\infty$.
The method is aimed at obtaining the so-called {\em free energy} of the system:
\beq\label{free.eq}
  \F  = \lim_{N\to\infty}-\frac{1}{N}\E[\ln Z_N]
      = \lim_{N\to\infty}-\frac{1}{N}\bigg\{
      \lim_{\nu\to0}\frac{\partial}{\partial\nu}\E[Z_N^\nu]\bigg\}.
\eeq
The method is convenient when direct calculation is impossible (or very hard)
whereas it is relatively easy to calculate the limit
$\lim_{N\to\infty}\E[Z_N^\nu]$ for positive integer $\nu$.
In typical applications, the partition function can be expressed as a
conditional average: $Z_N=\E[Z(\uv;\xv_N)|\xv_N]$.
Here, $\uv$ represents the system {\em microstate} and $\xv_N$ is a set of $N$
independent {\em quenched}%
\footnote{
A physical system is said to be in {\em quenched disorder} when some random
parameters characterizing its behavior do not evolve in time and are then said
{\em quenched} or frozen.
Spin glasses are a typical example.
Quenched disorder is in contrast to {\em annealed disorder} where all random
parameters evolve in time~\cite{fischerhertz,nishi2001}.
}
random parameters.
The method is based on the definition of $Z_{a,N}=\E[Z(\uv_a;\xv_N)|\xv_N]$ as
{\em replicas} of $Z_N$, obtained by averaging with respect to the independent
microstates $\uv_a$ for $a=1,\ldots,\nu$, conditionally on the quenched
parameters.
Hence, we can write:
\[
  G(\nu,N) \triangleq \E[Z_N^\nu] = \E\bigg[\prod_{a=1}^\nu Z_{a,N}\bigg].
\]
The validity of the replica method is subject to verification of the
following assumptions:
\ben
\item{\em(Extension from positive integers).}
The limit $\Gt(\nu)\triangleq\lim_{N\to\infty}G(\nu,N)$ is a smooth function of
$\nu$ in $\nu=0$.
This function is derived for positive integer $\nu$ and extended to a right
neighborhood of $\nu=0$.
\item{\em(Interchange of limits).}
The limits in \eqref{free.eq} can be exchanged, so that
\[
  \F  = \lim_{\nu\to0}\frac{\partial}{\partial\nu}
  \bigg\{\lim_{N\to\infty}-\frac{1}{N}G(\nu,N)\bigg\}.
\]
\item{\em(Replica symmetry).}
The derivation of the limit
\[
  \lim_{N\to\infty}\E\bigg[\prod_{a=1}^\nu Z(\uv_a;\xv_N)\bigg]
\]
is based on the saddlepoint approximation (see Appendix \ref{saddle.app} for a
brief summary of this topic and \cite{jensen,BH86} for a deeper account) and on
the symmetry of the stationary saddlepoint of $\prod_{a=1}^\nu|Z(\uv_a;\xv_N)|$
with respect to the replicated microstate arguments $\uv_a$.
\een

In our application, the role of the partition function is played by
\[
  Z_\nr=\exp[-I(\Hm)]=\det(\Id_\nr+\Hm\Qm\Hm^\H)^{-1}.
\]
Our goal is to determine the asymptotic series expansion in $\nu$ of
\[
  \gt(\nu) \triangleq \ln\E[Z_\nr^\nu] = \ln\E[\exp(-I(\Hm))],
\]
and to show that only the first and second-degree coefficients survive in the
limit.
This is tantamount to saying that the mutual information converges in
distribution to a Gaussian random variable.

\subsection{Derivation of the cumulant generating function}

Setting $\Am=\Id_\nu,\Cm=\Id_{\nr}+\Hm\Qm\Hm^\H,\Bm=\Dm=\0$ in
eq.~\eqref{ABCD.int.eq} of Appendix~\ref{integrals.app}, we can write:
\beqa
  \gt(\nu)
  &\triangleq&
  \ln\Big\{\E\Big[\det(\Id_{\nr}+\Hm\Qm\Hm^\H)^{-\nu}\Big]\Big\}
  \non
  &=&
  \ln\bigg\{
  \int_{\C^{\nr\times\nu}} \E[\etr\{-\pi\Um^\H(\Id_{\nr}+\Hm\Qm\Hm^\H)\Um\}]d\Um
  \bigg\}.
  \nonumber
\eeqa
Next, by setting $\Am=\Id_\nu,\Cm=\Id_\nt,\Bm=-\Dm=\Qm^{1/2}\Hm^\H\Um$ in
eq.~\eqref{ABCD.int.eq} of Appendix~\ref{integrals.app}, we obtain:
\[
  \etr(-\pi\Um^\H\Hm\Qm\Hm^\H\Um)
  = \int_{\C^{\nt\times\nu}} \etr[-\pi(\Vm^\H\Vm+\Um^\H\Hm\Qm^{1/2}\Vm
  -\Vm^H\Qm^{1/2}\Hm^\H\Um)]d\Vm.
\]
Hence, we can rewrite the CGF as:
\[
  \gt(\nu) = \ln\bigg\{
  \int_{\C^{\nr\times\nu}}d\Um\int_{\C^{\nt\times\nu}}
  \E[\etr\{-\pi(\Um^\H\Um+\Vm^\H\Vm
  +\Um^\H\Hm\Qm^{1/2}\Vm-\Vm^\H\Qm^{1/2}\Hm^\H\Um)\}]
  d\Vm\bigg\}.
\]
Now, we calculate the expectation by observing that
\[
  \tr(\Um^\H\Hm\Qm^{1/2}\Vm)
  = \tr(\Um^\H\Hbt\Vm)+\tr[(\Tmt^{1/2}\Vm\Um^\H\Rm^{1/2})\Hm_w].
\]
We obtain:
\beqa
  \gt(\nu)
  &=&
  \ln\bigg\{\int_{\C^{\nr\times\nu}}d\Um\int_{\C^{\nt\times\nu}}
  \etr\{-\pi(\Um^\H\Um+\Vm^\H\Vm
  +\Um^\H\Hbt\Vm-\Vm^\H\Hbt^\H\Um) \non
  &&
  -\pi^2\Um^\H\Rm\Um\Vm^\H\Tmt\Vm
  \} d\Vm\bigg\} \nonumber
\eeqa
after using the result (obtained by applying, e.g., eq.~\eqref{ABCD.int.eq} of
Appendix~\ref{integrals.app}):
\[
  \E[\etr(\Am\Hm_w-\Hm_w^\H\Am^\H)] = \exp(-\|\Am\|^2).
\]

Finally, by setting $\Am=\pi\Um^\H\Rm\Um$ and $\Bm=\pi\Vm^\H\Tmt\Vm$ in
eq.~\eqref{hub.eq} of Appendix~\ref{integrals.app}, we obtain:
\beqa\label{g.eq}
  \gt(\nu)
  &=&
  \ln\bigg\{
  \int_{\C^{\nr\times\nu}}d\Um\int_{\C^{\nt\times\nu}} d\Vm
  \lim_{\alpha,\beta\downarrow0}
  \int \etr(\Wm\Zm-\alpha\Wm\Wm^\T+\beta\Zm\Zm^\T)
  \non
  &&
  \etr[-\pi(\Um^\H\Um+\Vm^\H\Vm+\Um^\H\Hbt\Vm-\Vm^\H\Hbt^\H\Um
  +\Wm\Um^\H\Rm\Um+\Zm\Vm^\H\Tmt\Vm)] d\mu(\Wm,\Zm)
  \bigg\}
  \non
  &=&
  \ln\bigg\{
  \int \etr(\Wm\Zm) d\mu(\Wm,\Zm)\int_{\C^{\nr\nu}}d\uv\int_{\C^{\nt\nu}}d\vv \non
  &&
  \exp\bigg[
  -\pi(\uv^\H,\vv^\H)
  \begin{pmatrix}
  \Id_{\nu \nr}+\Wm\x\Rm & \Id_\nu\x\Hbt \\
  -\Id_\nu\x\Hbt^\H & \Id_{\nu \nt}+\Zm\x\Tmt
  \end{pmatrix}
  \begin{pmatrix}\uv\\\vv\end{pmatrix}
  \bigg]\bigg\}
  \non
  &=&
  \ln\bigg\{
  \int \etr(\Wm\Zm) {\det}^{-1}\begin{pmatrix}
  \Id_{\nu \nr}+\Wm\x\Rm & \Id_\nu\x\Hbt \\
  -\Id_\nu\x\Hbt^\H & \Id_{\nu \nt}+\Zm\x\Tmt
  \end{pmatrix} d\mu(\Wm,\Zm)
  \bigg\}.
\eeqa
Here, the exchange of the limit for $\alpha,\beta\downarrow0$ and the integral
in $\Um,\Vm$ is allowed from the Dominated Convergence
Theorem~\cite[Th.~5.30]{knapp} since the absolute value of the first integrand,
assuming $\Wm=\Wmt+\j\Wm_0$ and $\Zm=\Zm_0+\j\Zmt$ for real
$\Wmt,\Wm_0,\Zmt,\Zm_0$ determined by the integration contour chosen, is {\em
dominated} by
\[
  \kappa
  \exp\bigg\{-\pi(\|\Um\|^2+\|\Vm\|^2)
  -\alpha\bigg\|\Wmt+\frac{\pi\Um^\H\Rm\Um-\Zm_0}{2\alpha}\bigg\|^2
  -\beta \bigg\|\Zmt+\frac{\pi\Vm^\H\Tmt\Vm+\Wm_0}{2\alpha}\bigg\|^2
  \bigg\},
\]
which is plainly integrable over
$(\Um,\Vm,\Wmt,\Zmt)\in\C^{\nr\times\nu}\times\C^{\nt\times\nu}
\times\R^{\nu\times\nu}\times\R^{\nu\times\nu}$.
Similarly, the exchange of integration order is allowed by Fubini's
Theorem~\cite[Th.~5.47]{knapp} because the first integrand in \eqref{g.eq} is
measurable as its absolute value is integrable~\cite[Prop.~5.53(c)]{knapp}.
Finally, after the change of integration order and the application of the limit,
we used Property \ref{kron.prop.eq} and eq.~\eqref{ABCD.int.eq} of
Appendix~\ref{integrals.app}.

\subsection{Saddlepoint approximation}

In order to derive the asymptotic limit of \eqref{g.eq} as $\nt,\nr\to\infty$,
we use the method of saddlepoint approximation (see Appendix \ref{saddle.app}
for a summary of the main results and \cite{jensen,BH86} for a deeper account).

To this purpose, we look for a stationary point of the integrand's logarithm in
\eqref{g.eq}, namely,
\[
  \phi(\Wm,\Zm) \triangleq \tr(\Wm\Zm)
  -\ln\det\begin{pmatrix}
  \Id_{\nu \nr}+\Wm\x\Rm & \Id_\nu\x\Hbt \\
  -\Id_\nu\x\Hbt^\H & \Id_{\nu \nt}+\Zm\x\Tmt
  \end{pmatrix}.
\]
To this end, we resort to the following expansion:
\[
  \delta[\ln\det(\Xm)]
  =\sum_{k=1}^\infty\frac{(-1)^{k-1}}{k}\tr[(\Xm^{-1}\delta\Xm)^k] .
\]
Then, assuming {\em replica symmetry}~\cite{mousimsen03}, we look for a
stationary point of $\phi(\Wm,\Zm)$ of the form $\Wm=w\Id_\nu,\Zm=z\Id_\nu$
for some positive $w,z$.
We have:
\[
  \phi(w\Id_\nu,z\Id_\nu) =
  \nu\left\{wz-\ln\det\begin{pmatrix}
    \Id_{\nr}+w\Rm & \Hbt \\ -\Hbt^\H & \Id_{\nt}+z\Tmt
  \end{pmatrix}\right\}.
\]
The values of $w$ and $z$ are obtained by setting the first-order variation of
$\phi(\Wm,\Zm)$ equal to zero.
More generally, we can write the total variation at $(w\Id_\nu,z\Id_\nu)$ as:
\beqa\label{totexp.eq}
  \delta\phi(w\Id_\nu,z\Id_\nu)
  &=&
  \tr(w\delta\Zm+z\delta\Wm+\delta\Wm\delta\Zm)
  \non
  &&
  +\sum_{k=1}^\infty\frac{(-1)^k}{k}\tr\left\{
  \begin{pmatrix}
    \delta\Wm\x(\Am_1\Rm) & \delta\Zm\x(\Bm_1\Tmt) \\
    \delta\Wm\x(\Cm_1\Rm) & \delta\Zm\x(\Dm_1\Tmt)
  \end{pmatrix}^k
  \right\}
\eeqa
where~\cite[p.\ 18]{hornjohnson}:
\beq\label{ABCD.eq}\left\{\begin{array}{lll}
  \Am_1 &=&  [\Id_{\nr}+w\Rm+\Hbt(\Id_{\nt}+z\Tmt)^{-1}\Hbt^\H]^{-1} \\
  \Bm_1 &=& -(\Id_{\nr}+w\Rm)^{-1}\Hbt\Dm_1 \\
  \Cm_1 &=&  (\Id_{\nt}+z\Tmt)^{-1}\Hbt^\H\Am_1 = -\Bm_1^\H \\
  \Dm_1 &=&  [\Id_{\nt}+z\Tmt+\Hbt^\H(\Id_{\nr}+w\Rm)^{-1}\Hbt]^{-1}
\end{array}\right.\eeq

Next, we focus on the second-order expansion of $\delta\phi(w\Id_\nu,z\Id_\nu)$,
since we show (Theorem~\ref{th.gauss}) that subsequent terms vanish as
$\nr\to\infty$.
We have:
\beqa
  \delta\phi(w\Id_\nu,z\Id_\nu)
  &=&
  [z-\tr(\Am_1\Rm)]\tr(\delta\Wm)+[w-\tr(\Dm_1\Tmt)]\tr(\delta\Zm) \non
  &&
  +\frac{1}{2}\tr[(\Am_1\Rm)^2]\tr(\delta\Wm^2)
  +\frac{1}{2}\tr[(\Dm_1\Tmt)^2]\tr(\delta\Zm^2) \non
  &&
  +[1+\tr(\Bm_1\Tmt\Cm_1\Rm)]\tr(\delta\Wm\delta\Zm).
  \nonumber
\eeqa

Thus, the stationary point is characterized by the values of $w,z$ that are
solutions of the following equations:
\beq\label{wz.eq}
\left\{\begin{array}{lll}
  w &=& \tr(\Dm_1\Tmt) = \tr\Big\{
  [z\Id_{\nt}+\Tmt^{-1}+\Tmt^{-1}\Hbt^\H(\Id_{\nr}+w\Rm)^{-1}\Hbt]^{-1}
  \Big\} \\[1mm]
  z &=& \tr(\Am_1\Rm) = \tr\Big\{
  [w\Id_{\nr}+\Rm^{-1}+\Rm^{-1}\Hbt(\Id_{\nt}+z\Tmt)^{-1}\Hbt^\H]^{-1}
  \Big\}
\end{array}\right.
\eeq
The uniqueness of the solution of \eqref{wz.eq} is proved in
Appendix~\ref{wz.uniq.app}.
In the following, the solution of \eqref{wz.eq} will be denoted by $(\w,\z)$,
and the matrices defined in \eqref{ABCD.eq} after setting $w=\w,z=\z$ will be
denoted by $\Ac,\Bc,\Cc,\Dc$, respectively.

By the previous results, the saddle-point asymptotic approximation of $g(\nu)$
(calculated along the directions of steepest descent from the stationary point)
can be written as:
\beqa
  g(\nu)
  &\sim&
  \nu\left[\w\z-\ln\det\begin{pmatrix}
    \Id_{\nr}+\w\Rm & \Hbt \\
    -\Hbt^\H        & \Id_{\nt}+\z\Tmt
  \end{pmatrix}\right] \non
  &&+
  \ln\bigg\{\frac{1}{(2\pi)^{\nu^2}}\int_{\R^{\nu\times\nu}}\int_{\R^{\nu\times\nu}}
  \etr\bigg[\frac{1}{2}\Big\{\asol\Xm^2-\bsol\Ym^2+2\,\j\gsol\Xm\Ym\Big\}
  \bigg]d\Xm d\Ym\bigg\}
  \nonumber
\eeqa
where $\asol\triangleq\tr[(\Ac\Rm)^2]$,
$\bsol\triangleq\tr[(\Dc\Tmt)^2]$, and
$\gsol\triangleq1+\tr(\Bc\Tmt\Cc\Rm)$.
Using the trace expansion
\beqa
  \tr[\asol(\Xm)^2-\bsol(\Ym)^2+2\,\j\,\gsol\Xm\Ym]
  &=&
  \sum_a[\asol(\Xm)_{aa}^2-\bsol(\Ym)_{aa}^2
  +2\,\j\,\gsol(\Xm)_{aa}(\Ym)_{aa}] \non
  &&+
  2\sum_{a<b}[\asol(\Xm)_{ab}(\Xm)_{ba}-\bsol(\Ym)_{ab}(\Ym)_{ba}
  +2\,\j\,\gsol(\Xm)_{ab}(\Ym)_{ba}],
  \nonumber
\eeqa
the CGF can be readily evaluated as
\[
  g(\nu) \sim \nu\left[\w\z-\ln\det\begin{pmatrix}
    \Id_{\nr}+\w\Rm & \Hbt \\
    -\Hbt^\H        & \Id_{\nt}+\z\Tmt
  \end{pmatrix}\right]
  -\frac{1}{2}\nu^2\ln(\gsol^2-\asol\bsol).
\]
It can be shown that $0<\gsol^2-\asol\bsol<1$ (see Appendix \ref{variance.app}).
As noticed in \cite[Sec.\ IV-C]{mousim07} in a different setting, this
inequality is required to guarantee the local stability of the replica-symmetric
saddlepoint against variations around it.

Thus, we have the following asymptotic expressions of the mean and variance of
$I(\xv;\yv)$:
\beq\label{mean.var.eq}
  \left\{\begin{array}{rcl}
  \E[I(\xv;\yv)]
  &\sim&
  \mu_I\triangleq
  \dst\ln\det\begin{pmatrix}
    \Id_{\nr}+\w\Rm & \Hbt \\
    -\Hbt^\H        & \Id_{\nt}+\z\Tmt
  \end{pmatrix}-\w\z \\
  \V[I(\xv;\yv)]
  &\sim&
  \sigma_I^2\triangleq
  -\ln(\gsol^2-\asol\bsol)
  \end{array}\right.
\eeq
expressed in (nat/complex dimension) and (nat/complex dimension)$^2$,
respectively.

\section{Asymptotic Gaussianity}

The asymptotic Gaussianity of the mutual information requires that all terms of
order $k>2$ in the series expansion \eqref{totexp.eq} vanish as $\nr\to\infty$.
This is shown under some mild technical conditions expressed in the statement of
the following theorem.

\begin{theorem}\label{th.gauss}
In the asymptotic setting specified in Section \ref{asy.set.sec}, assuming
further that the matrices $\Rm$ and $\Tmt=\Tm^{1/2}\Qm\Tm^{1/2}$ satisfy the
conditions
\[
  \sigma_{\max}(\Rm)=\Theta(1) \qquad\text{and}\qquad
  \sigma_{\max}(\Tmt)=\Theta(\nr^{-1}),
\]
as $\nr\to\infty$, then the mutual information $I(\xv;\yv)$ converges in
distribution to a Gaussian random variable with mean and variance specified in
\eqref{mean.var.eq}.
\end{theorem}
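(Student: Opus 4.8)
The plan is to push the saddlepoint evaluation of \eqref{g.eq} one step past what was needed for \eqref{mean.var.eq}, namely to control the anharmonic corrections produced by the $k\ge3$ terms of \eqref{totexp.eq} and to show that they disappear as $\nr\to\infty$. Write $\Wm=\w\Id_\nu+\delta\Wm$, $\Zm=\z\Id_\nu+\delta\Zm$, and let $\Mm$ denote the $2\times2$ block matrix occurring in \eqref{totexp.eq} (with $\Am_1,\ldots,\Dm_1$ replaced by their saddle values $\Ac,\ldots,\Dc$). The first thing I would record is the factorization
\[
  \Mm=\Km^{-1}\begin{pmatrix}\delta\Wm\x\Rm&\0\\\0&\delta\Zm\x\Tmt\end{pmatrix},
  \qquad
  \Km\triangleq\begin{pmatrix}\Id_\nu\x(\Id_\nr+\w\Rm)&\Id_\nu\x\Hbt\\-\Id_\nu\x\Hbt^\H&\Id_\nu\x(\Id_\nt+\z\Tmt)\end{pmatrix}
\]
(here $\Km$ is the value at the saddle of the matrix under the determinant in \eqref{g.eq}). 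This follows from the block-inversion formula, whose Schur complements make the four blocks of $\Km^{-1}$ equal to $\Id_\nu\x\Ac,\Id_\nu\x\Bc,\Id_\nu\x\Cc,\Id_\nu\x\Dc$ --- exactly the matrices \eqref{ABCD.eq} at $w=\w,z=\z$ --- together with $(\Am\x\Bm)(\Cm\x\Dm)=(\Am\Cm)\x(\Bm\Dm)$. The point of the rewriting is that $\Km^{-1}$ is bounded uniformly: since $\w,\z>0$ (the positive solution of \eqref{wz.eq}) and $\Rm,\Tmt$ are nonnegative definite, the Hermitian part of $\Km$ is $\H(\Km)=\diag\bigl(\Id_\nu\x(\Id_\nr+\w\Rm),\,\Id_\nu\x(\Id_\nt+\z\Tmt)\bigr)\succeq\Id$, whence $\|\Km\uv\|\ge\|\uv\|$ for every $\uv$, i.e.\ $\sigma_{\max}(\Km^{-1})\le1$.

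Next I would size the remaining factor. At the replica-symmetric stationary point the Gaussian fluctuations scale as $\sigma_{\max}(\delta\Wm)=O(\asol^{-1/2})$ and $\sigma_{\max}(\delta\Zm)=O(\bsol^{-1/2})$, where $\asol=\tr[(\Ac\Rm)^2]$ and $\bsol=\tr[(\Dc\Tmt)^2]$; under the hypotheses these are $\asol=\Theta(\nr)$ and $\bsol=\Theta(\nr^{-1})$, so $\sigma_{\max}(\delta\Wm)=O(\nr^{-1/2})$ and $\sigma_{\max}(\delta\Zm)=O(\nr^{1/2})$ on the region carrying the mass of the integral. Feeding in $\sigma_{\max}(\Rm)=\Theta(1)$ and $\sigma_{\max}(\Tmt)=\Theta(\nr^{-1})$ gives, block by block,
\[
  \sigma_{\max}\!\begin{pmatrix}\delta\Wm\x\Rm&\0\\\0&\delta\Zm\x\Tmt\end{pmatrix}
  =\max\{\sigma_{\max}(\delta\Wm)\sigma_{\max}(\Rm),\ \sigma_{\max}(\delta\Zm)\sigma_{\max}(\Tmt)\}=O(\nr^{-1/2}),
\]
hence $\sigma_{\max}(\Mm)\le\sigma_{\max}(\Km^{-1})\cdot O(\nr^{-1/2})=O(\nr^{-1/2})$, while $\rank(\Mm)\le\nu(\nr+\nt)=O(\nr)$. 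Therefore the $k$-th term of \eqref{totexp.eq} satisfies $\frac1k|\tr(\Mm^k)|\le\frac1k\,\rank(\Mm)\,\sigma_{\max}(\Mm)^k=O(\nr^{1-k/2})\to0$ for every $k\ge3$; equivalently, since $\sigma_{\max}(\Mm)<1$ for large $\nr$, the whole tail $\sum_{k\ge3}\frac{(-1)^k}{k}\tr(\Mm^k)=-\ln\det(\Id+\Mm)+\tr(\Mm)-\tfrac12\tr(\Mm^2)$ is $O\bigl(\rank(\Mm)\,\sigma_{\max}(\Mm)^3\bigr)=O(\nr^{-1/2})$, so that $e^{\phi}$ differs from its Gaussian approximation by a $1+O(\nr^{-1/2})$ factor there and no truncation is needed.

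With the $k\ge3$ contributions gone, \eqref{totexp.eq} collapses --- after the $k=1$ term cancels the two linear terms by the fixed-point equations \eqref{wz.eq} --- to the quadratic form $\tfrac12\asol\tr(\delta\Wm^2)+\tfrac12\bsol\tr(\delta\Zm^2)+\gsol\tr(\delta\Wm\delta\Zm)$, and the fluctuation integral reproduces exactly the $-\tfrac12\nu^2\ln(\gsol^2-\asol\bsol)$ of \eqref{mean.var.eq}; thus $g(\nu)\to-\nu\mu_I+\tfrac12\nu^2\sigma_I^2$ for $\nu$ in a right neighbourhood of $0$, which is the cumulant generating function of a Gaussian with the parameters of \eqref{mean.var.eq}, and convergence in distribution of $I(\xv;\yv)$ follows by the usual continuity argument for generating functions (the routine analytic underpinnings of the saddlepoint approximation itself being handled exactly as in the derivation of \eqref{g.eq}). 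The genuine work, and the step I expect to be the main obstacle, is the one estimate used above as a black box: $\asol=\Theta(\nr)$ and $\bsol=\Theta(\nr^{-1})$, i.e.\ that the saddle fluctuations really do scale like $\nr^{-1/2}$ and $\nr^{1/2}$. This is precisely where the two hypotheses are needed --- they prevent the spectra of $\Rm$ and $\Tmt$, and through $\Ac=[\Id_\nr+\w\Rm+\Hbt(\Id_\nt+\z\Tmt)^{-1}\Hbt^\H]^{-1}$ and $\Dc$ also the $\Hbt$-terms (which, being $\Theta(\nr)$ in Frobenius norm, can support only $O(1)$ dominant modes), from concentrating, and so keep $\tr(\Ac\Rm)=\Theta(\nr)$, $\tr(\Dc\Tmt)=\Theta(1)$ and $\asol\bsol$ --- hence $\gsol^2-\asol\bsol$ --- pinned strictly between two positive constants. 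Everything else is the short chain of inequalities above, whose decisive and essentially free ingredient is the uniform bound $\sigma_{\max}(\Km^{-1})\le1$.
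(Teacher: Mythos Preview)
Your approach is correct and takes a genuinely different route from the paper. Both arrive at $|\tr(\Mm^k)|=O(\nr^{1-k/2})$ for $k\ge3$, but the mechanisms differ. The paper works block by block: it rescales by $\nr^{\pm1/2}$ and then bounds the maximum singular values of $\Ac\Rm$, $\Bc\Tmt$, $\Cc\Rm$, $\Dc\Tmt$ one at a time, the off-diagonal ones requiring the inequality $\lambda/(1+\lambda)^2\le1/4$ applied to a Schur-complement rewriting. Your factorization $\Mm=\Km^{-1}\diag(\delta\Wm\x\Rm,\delta\Zm\x\Tmt)$ together with the one-line bound $\sigma_{\max}(\Km^{-1})\le1$ (from $\H(\Km)\succeq\Id$) replaces all four of those estimates by a single, essentially free, operator-norm inequality; this is cleaner and more conceptual. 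What the paper's longer route buys is explicit individual control of each block, which is occasionally useful elsewhere (e.g.\ in the variance analysis of Appendix~\ref{variance.app}).

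One structural difference worth noting: you route the scaling through the fluctuation widths $\asol^{-1/2}$, $\bsol^{-1/2}$, so you need the \emph{lower} bounds $\asol=\Omega(\nr)$, $\bsol=\Omega(\nr^{-1})$ that you correctly flag as the main unproved ingredient. The paper sidesteps this by imposing the $\nr^{\pm1/2}$ change of variables \emph{a priori} and then only needs the upper bounds $\asol=O(\nr)$, $\bsol=O(\nr^{-1})$, which follow immediately from $\Ac\Rm\le\Rm$ and $\Dc\Tmt\le\Tmt$. Your factorization works equally well with that fixed rescaling --- after it, $\sigma_{\max}\bigl(\diag(\nr^{-1/2}\delta\Wm\x\Rm,\,\nr^{1/2}\delta\Zm\x\Tmt)\bigr)=O(\nr^{-1/2})$ directly from the hypotheses on $\sigma_{\max}(\Rm)$ and $\sigma_{\max}(\Tmt)$ --- so combining your $\Km^{-1}$ bound with the paper's change of variables would eliminate the black box while keeping the economy of your argument.
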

\begin{proof}
See Appendix \ref{th.gauss.app}.
\end{proof}

\begin{remark}
The conditions in Theorem \ref{th.gauss} are tantamount to saying that the
spatial correlation structure has no asymptotically dominant eigenmodes, i.e.,
both $\Rm$ and $\Tmt$ do not have asymptotically dominant eigenvalues.
Otherwise, if $\Rm$ and $\Tmt$ had asymptotically dominant eigenvalues, the
asymptotic nature of the MIMO channel would be drastically changed since only a
small subset of antennas would play an asymptotically dominant role while the
remaining ones would be asymptotically irrelevant.

Considering $\Tmt$ instead of $\Tm$ implies that also the structure of the input
covariance matrix might affect the asymptotic evolution of the MIMO system.
As an example, restricting power allocation to a finite subset of transmit
antennas would be equivalent to nullify the corresponding columns of the channel
matrix $\Hm$ and then would drastically change the asymptotic behavior of the
MIMO channel.
\end{remark}

\begin{remark}
From the property of asymptotic Gaussianity, the outage probability
corresponding to a given rate $R$ can be approximated by
\[
  P_o \approx P(\N(\mu_I,\sigma_I^2)<R)
  = Q\bigg(\frac{\mu_I-R}{\sigma_I}\bigg)
\]
as $N\to\infty$.\footnote{
Here, $Q(x)=P(\N(0,1)>x)=\frac{1}{\sqrt{2\pi}}\int_x^\infty\exp(-u^2/2)du$.}
Then, the outage mutual information is given asymptotically by
\beq\label{outcap.eq}
  I_\epsilon \approx \mu_I-\sigma_I~Q^{-1}(\epsilon)
\eeq
with $\epsilon$ denoting the outage probability.
For example, $I_{10\%}\approx\mu_I-1.28\sigma_I$ and
$I_{1\%}\approx\mu_I-2.33\sigma_I$.
As a result, it would be desirable to have a large mean $\mu_I$ and small
standard deviation $\sigma_I$, the latter being a mere consequence of the fact
that the outage mutual information improves as the cumulative mutual information
distribution transition from $0$ to $1$ becomes steeper.
\end{remark}

\section{Analytic example: Ergodic capacity of the uncorrelated channel}

Here we present an application of the asymptotic analytic results obtained to
the calculation of the ergodic capacity of the spatially uncorrelated Rician
fading MIMO channel.
Assuming $\Rm=\Id_\nr$ and $\Tm=\Id_\nt$, the fixed-point eqs.~\eqref{wz.eq}
become:
\beq\label{wz.eq.uncor}
\left\{\begin{array}{lll}
  w &=& \tr\Big\{[z\Id_\nt+\Qm^{-1}+\Hb^\H\Hb/(1+w)]^{-1}\Big\} \\
  z &=& \tr\Big\{[(1+w)\Id_\nr+\Hb(z\Id_\nt+\Qm^{-1})^{-1}\Hb^\H]^{-1}\Big\}
\end{array}\right.
\eeq
After some linear algebra, we can write the asymptotic ergodic mutual
information as follows:
\beq\label{asy.mean.uncor}
  \mu_I = \nr\ln(1+w)+\ln\det
  \bigg[\Id_\nt+\bigg(\Id_\nt+\frac{\Lam_H}{1+w}\bigg)\Qb\bigg]-wz
\eeq
where $\Lam_H$ derives from $\Hb^\H\Hb=\Um_H^\H\Lam_H\Um_H$ and
$\Qb\triangleq\Um_H\Qm\Um_H^\H$.

From Hadamard's inequality~\cite[Th.~7.8.1]{hornjohnson}, we know that the
maximum determinant of a positive definite diagonal matrix is upper bounded by
the product of its diagonal elements and equality holds if and only if the
matrix is diagonal.
Thus, under the power constraint $\tr(\Qm)=\tr(\Qb)\le\rho$, we can apply a
standard water-filling argument~\cite{cover} and obtain:
\beq
  (\Qb)_{ii} = \bigg(\xi-\frac{1+w}{1+w+(\Lam_H)_{ii}}\bigg)_+
\eeq
where $(x)_+\triangleq\max\{0,x\}$ and $\xi$ can be obtained by solving
the equation:
\beq\label{wf.eq.uncor}
  \sum_i \bigg(\xi-\frac{1+w}{1+w+(\Lam_H)_{ii}}\bigg)_+ = \rho.
\eeq
Notice that eqs.~\eqref{wz.eq.uncor} and \eqref{wf.eq.uncor} have to be solved
simultaneously since we have a mutual interdependence between $\Qb$ and the pair
$(w,z)$.
In other words, the solution can be derived by implementing a simple iterative
{\em fixed-point water-filling} algorithm.

Finally, we notice that the structure of the asymptotic ergodic capacity
achieving covariance matrix is consistent with \cite{jagold04}, which showed
that the capacity achieving covariance matrix and the matrix $\Hb^\H\Hb$ have
the same eigenvectors, whereas the problem of calculating the eigenvalues
requires numerical optimization techniques such as those used in~\cite{jorsboc}.
It is then clear the advantage of the asymptotic approximation, which allows to
derive analytic results very simply.

\section{Numerical examples}

In this section we consider a baseline system scenario where we assume that the
average channel matrix is given by $(\Hb)_{ij}=1$ for all $i,j$, the spatial
correlation matrices are of exponential type with common base $\alpha$, namely,
$(\Tm)_{ij}=(\Rm)_{ij}=\alpha^{|i-j|}$, there is no CDIT, so that capacity is
achieved by setting $\Qm=q\Id_t$, and the specification is completed by the
following set of parameters:
\beq
\left\{\begin{array}{ll}
  \nr       &= 4 \\
  \nt/\nr   &= 1 \\
  K         &= 10 \text{~dB} \\
  \alpha    &= 0 \\
\end{array}\right..
\eeq
Then, we consider the impact of changing each parameter in turn, with the aim of
illustrating the accuracy of the asymptotic approximation proposed.
For each parameter we plot the mean and standard deviation of the mutual
information by using the asymptotic method (solid lines) and Monte-Carlo
simulation (markers).

\subsection{Impact of the number of antennas}

Figures \ref{new_mean_vs_antennas} and \ref{new_stdev_vs_antennas} describe the
impact of the number of antennas and of the SNR on the asymptotic approximation
accuracy.
It can be noticed that the approximation of the mean (ergodic mutual information)
is always excellent (the maximum relative error being less than 2\%,
corresponding to the case of $\nt=\nr=2$ and SNR $=30$ dB).
The accuracy of the standard deviation is very good when the SNR is $0$ or
$10$~dB (relative error always less than 1\%) but only fairly accurate
when the SNR is $20$ or $30$~dB (maximum relative error around 5\%).
However, it must be noted that the impact on the outage mutual information
approximation \eqref{outcap.eq} is attenuated by the fact that the standard
deviation is considerably smaller than the mean when the SNR is sufficiently
large.

\subsection{Impact of the antenna ratio}

Figures \ref{new_mean_vs_tx2rx_antennas} and \ref{new_stdev_vs_tx2rx_antennas}
describe the impact of the TX to RX antenna ratio and of the SNR on the
asymptotic approximation accuracy.
It is assumed that the number of receive antenna is fixed, $\nr=4$, and the
number of transmit antennas ranges from $\nt=1$ to $10$.
Both the mean and the standard deviation of the mutual information display very
good accuracy with a maximum relative error smaller than 2\%.

\subsection{Impact of the Rice factor}

Figures \ref{new_mean_vs_k} and \ref{new_stdev_vs_k} describe the impact of the
Rice factor and of the SNR on the asymptotic approximation accuracy.
It can be noticed that the ergodic capacity accuracy is always very good in the
cases observed (maximum relative error around 1\%) while the standard deviation
is slightly overestimated for low $K$ (maximum relative error around 4\%) and
underestimated higher $K$ (the threshold depending on the SNR).
This effect can be explained by considering that the low $K$ condition entails a
larger amount of randomness in the channel matrix that increases the asymptotic
variance expression.
The deviation is larger for high SNR, as in the previous cases observed.
Again, the error on the standard deviation has a modest relative impact on the
outage mutual information which is the goal of the variance approximation.

\subsection{Impact of spatial correlation}

Figures \ref{new_mean_vs_alpha} and \ref{new_stdev_vs_alpha} describe the impact
of spatial correlation of the SNR on the asymptotic approximation accuracy.
The relative error remains below $1\%$ for the ergodic capacity and $3\%$
for the standard deviation.
The relative error increases as the SNR increases and correlation decreases.

It is interesting to note that spatial correlation has a minor effect at
moderate SNR, whereas it becomes important as the SNR increases.

\bigskip

Summarizing the numerical results obtained so far we can say that randomness
tends to increase the relative error between the analytic asymptotic
approximation and the actual value (obtained via Monte-Carlo simulation).
So, increasing the SNR and decreasing the correlation or the Rice factor
produces an increase of randomness that entails a larger relative error.
Besides this, it can be noticed that the relative error gets larger as the
antenna ratio gets closer to $1$.
Finally, the relative error decreases as the number of antennas increases but
this is a trivial consequence of the fact that the approximation is asymptotic
in the number of antennas.

\subsection{Outage mutual information}

In order to assess the accuracy of the proposed asymptotic approximation we
consider here the outage mutual information, defined implicitly as follows:
\beq
  I_\epsilon \triangleq \sup_{R>0}\Big\{R~:~P(I(\Hm)<R)<\epsilon\Big\}.
\eeq
We compare the asymptotic analytic approximation \eqref{outcap.eq} against the
exact numerical result obtained by accurate (though lengthy) Monte-Carlo
simulation.
The results are illustrated in Figs.~\ref{outcap_vs_antennas} and
\ref{outcap_vs_alpha}, reporting $I_{10\%}$ (i.e., the $10\%$ outage mutual
information) versus the number of antennas and the exponential correlation base
$\alpha$, respectively.
The maximum relative error is around $1\%$ in the cases considered, with the
maximum corresponding to the case of $2\times2$ MIMO with the largest SNR.
This is comparable with the relative error already found for the ergodic
capacity approximation.

\section{Conclusions}

This paper presented an analytic approach, based on the replica method, allowing
to approximate the statistics of the mutual information for a separately
correlated Rician fading MIMO channel.
We showed that the mutual information statistics approach the Gaussian
distribution as the number of transmit and receive antennas grows large, with
their ratio approaching a finite constant.

More specifically, we saw that the mutual information mean (corresponding to the
ergodic mutual information of the channel) yields a very accurate approximation
of the real value (which was obtained by extensive Monte-Carlo simulation) with
a relative error never larger than a few percent units.
This remarkable accuracy was obtained not only with a large number of antennas
but also in cases when the number of antennas is definitely small, even in the
limiting SISO case!
As a result, the asymptotic analysis becomes a valuable tool to assess the
system performance whenever this has to be done analytically, such as in the
case of covariance optimization with the goal of finding the channel capacity.
This application is illustrated in the paper by considering the uncorrelated
Rician fading channel.

We also showed that the mutual information standard deviation approximation is
very good.
This is important in view of the application of these results to the derivation
of the outage mutual information.
We showed that the outage mutual information, obtained by using the asymptotic
mean and standard deviation, is very accurate in the cases considered, since the
relative error is always lower than 1\%.
This result depends on the asymptotic Gaussianity of the mutual information,
which is proven in the paper.

Summarizing, the analytic approach presented allows to address concisely the
performance of realistic models of MIMO channels (based on separately correlated
Rician fading).
The accuracy obtained is sufficient for most applications and eliminates the
need to run computer intensive simulations or implement numerical methods in
orderto optimize the system performance.
Many applications of this asymptotic method are appearing in the conference
literature.
Among them, it is worth mentioning the optimization of the transmitted signal
covariance, with and without interference, and the derivation of the capacity
region of the multiuser multiple access MIMO channel.
A simple example of the former (relevant to the case of uncorrelated Rician
fading without interference) is given in the paper.

\section{Acknowledgments}

The author would like to thank the Associate Editor and the Reviewers for the
constructive comments and suggestions, which allowed him to improve the quality
of the original submission.
Additionally, the author wishes to acknowledge several useful discussions with
Dr.\ Erwin Riegler.

\newpage\appendix

\subsection{Property of the Kronecker product}\label{kron.app}

\begin{prop}\label{kron.prop.eq}
For any matrices $\Am,\Xm,\Bm,\Ym$ such that the product $\Am\Xm^\H\Bm\Ym$
exists and is a square matrix, we have:
\[
  \tr(\Am\Xm^\H\Bm\Ym) = \vec(\Xm)^\H(\Am^\T\x\Bm)\vec(\Ym).
\]
\end{prop}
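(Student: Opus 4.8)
The plan is to reduce the identity to two elementary facts: the cyclic invariance of the trace and the standard vectorization formula $\vec(\Pm\Qm\Rm)=(\Rm^\T\x\Pm)\vec(\Qm)$ (see, e.g., \cite{hornjohnson}; it also follows from a one-line entrywise check, or from properties $ii)$ and $v)$ of the Kronecker product recalled in Section~\ref{kron.app}).

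First I would invoke cyclicity of the trace --- legitimate precisely because the hypothesis guarantees that $\Am\Xm^\H\Bm\Ym$ is a well-defined square matrix --- to write
\[
  \tr(\Am\Xm^\H\Bm\Ym)=\tr\big(\Xm^\H(\Bm\Ym\Am)\big).
\]
The same hypothesis forces $\Bm\Ym\Am$ to have exactly the dimensions of $\Xm$, so the right-hand side is genuinely an inner product: for any $\Mm$ of the size of $\Xm$ one has $\tr(\Xm^\H\Mm)=\vec(\Xm)^\H\vec(\Mm)$, since both sides equal $\sum_{i,j}\overline{(\Xm)_{ij}}\,(\Mm)_{ij}$ regardless of the column-stacking order used in $\vec(\cdot)$, as long as the same order is used on both factors. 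Taking $\Mm=\Bm\Ym\Am$ gives
\[
  \tr(\Am\Xm^\H\Bm\Ym)=\vec(\Xm)^\H\vec(\Bm\Ym\Am).
\]

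Second, I would apply the vectorization formula with $\Pm=\Bm$, $\Qm=\Ym$, $\Rm=\Am$, obtaining $\vec(\Bm\Ym\Am)=(\Am^\T\x\Bm)\vec(\Ym)$, and substitute to conclude. There is no real obstacle here; the only point requiring care is bookkeeping. One must fix the $\vec$-convention (columns stacked left to right) once and for all so that the \emph{transpose} --- not the conjugate transpose --- on $\Am$ is the one that surfaces inside the Kronecker factor, while the single conjugation in the identity is carried entirely by $\Xm^\H$ on the left and $\vec(\Xm)^\H$ on the right. Tracking sizes (writing $\Xm\in\C^{a\times b}$, so that conformability of $\Am\Xm^\H\Bm\Ym$ and its squareness force $\Am\in\C^{c\times b}$, $\Bm\in\C^{a\times d}$, $\Ym\in\C^{d\times c}$) confirms that $\Bm\Ym\Am\in\C^{a\times b}$ matches $\Xm$ and that $\Am^\T\x\Bm$ maps $\vec(\Ym)$ into the right space, so every product above is legitimate.
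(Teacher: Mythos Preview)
Your proof is correct. The paper takes a more bare-hands route: it writes $\Xm=(\xv_1,\ldots,\xv_m)$ and $\Ym=(\yv_1,\ldots,\yv_n)$ by columns, expands $\Xm^\H\Bm\Ym$ as the block $[\xv_a^\H\Bm\yv_b]$, multiplies by $\Am$ and takes the trace to get $\sum_{a,b}(\Am)_{ab}\,\xv_b^\H\Bm\yv_a$, and then recognizes this double sum as $\vec(\Xm)^\H(\Am^\T\x\Bm)\vec(\Ym)$ directly from the block structure of $\Am^\T\x\Bm$. Your argument instead factors the identity into three off-the-shelf pieces (trace cyclicity, $\tr(\Xm^\H\Mm)=\vec(\Xm)^\H\vec(\Mm)$, and $\vec(\Bm\Ym\Am)=(\Am^\T\x\Bm)\vec(\Ym)$), which is cleaner if one is willing to quote the vectorization formula, whereas the paper's computation is self-contained and in effect re-derives that formula in situ. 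One small note: the Kronecker properties $ii)$ and $v)$ you cite live in the notation subsection rather than in Appendix~\ref{kron.app}, and they do not by themselves yield the vec identity without an additional step; the ``one-line entrywise check'' you mention is the honest justification, and is essentially what the paper does.
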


\begin{proof}
Assume that $\Xm=(\xv_1,\ldots,\xv_m)$ and $\Ym=(\yv_1,\ldots,\yv_n)$.
Then,
\beqa
  \tr(\Am\Xm^\H\Bm\Ym)
  &=&
  \tr\left[\Am\left(\begin{array}{ccc}
  \xv_1^H\Bm\yv_1&\cdots&\xv_1^H\Bm\yv_n\\
  \vdots&\ddots&\vdots\\
  \xv_m^H\Bm\yv_1&\cdots&\xv_m^H\Bm\yv_n\\
  \end{array}
  \right)\right]\non
  &=&
  \sum_{a,b}(\Am)_{a,b}\xv_b^H\Bm\yv_a
  = \sum_b\sum_a\xv_b^H(\Am)_{a,b}\Bm\yv_a \non
  &=&
  \vec(\Xm)^H (\Am^\T\x\Bm)\vec(\Ym). \nonumber
\eeqa
\end{proof}

\subsection{Integrals}\label{integrals.app}

\begin{intexp}
For every pair of Hermitian positive definite matrices
$\Am\in\C^{m\times m},\Cm\in\C^{n\times n}$, and for any matrices
$\Bm,\Dm\in\C^{m\times n}$, we have:
\beqa\label{ABCD.int.eq}
  \int_{\C^{n\times m}} \etr[-\pi(\Am\Um^\H\Cm\Um+\Bm^\H\Um+\Um^\H\Dm)]d\Um
  &=&
  \det(\Am^\T\x\Cm)^{-1}\etr(\pi\Am^{-1}\Bm\Cm^{-1}\Dm^\H)
  \non
  &=&
  \det(\Am)^{-n}\det(\Cm)^{-m}\etr(\pi\Am^{-1}\Bm\Cm^{-1}\Dm^\H).
  \non
\eeqa
\end{intexp}

\begin{proof}
From Property \ref{kron.prop.eq}, we have:
\[
  \tr(\Am\Um^\H\Cm\Um) = \vec(\Um)^H (\Am^\T\x\Cm)\vec(\Um).
\]
Then, we can write the integral on the lhs of \eqref{ABCD.int.eq} as
\[
  \int_{\C^{mn}} \exp[-\pi(\uv^\H\Pm\uv+\bv^\H\uv+\uv^\H\dv)]d\uv
\]
where $\Pm=\Am^\T\x\Cm$, $\bv=\vec(\Bm)$, and $\dv=\vec(\Dm)$.
Since $\Am$ and $\Cm$ are Hermitian and positive definite by assumption, so are
$\Am^\T$ and $\Pm=\Am^\T\x\Cm$.

Applying the change of variables $\vv=\Pm^{1/2}\uv$, we obtain:
\[
  (\det\Pm)^{-1} \int_{\C^{mn}}
  \exp[-\pi(\vv^\H\vv+\bv^\H\Pm^{-1/2}\vv+\vv^\H\Pm^{-1/2}\dv)]d\vv.
\]
This integral can also be written as
\[
  (\det\Pm)^{-1} \prod_{a=1}^{mn}\int_\C
  \exp[-\pi(|v_a|^2+\bt_a^*v_a+v_a^*\dt_a)]dv_a
\]
where we let $\bvt=\Pm^{-1/2}\bv$ and $\dvt=\Pm^{-1/2}\dv$.
Setting $v_a=v_{ax}+jv_{ay}$, we have
\[
  |v_a|^2+\bt_a^*v_a+v_a^*\dt_a
  = v_{ax}^2+v_{ax}(\bt_a^*+\dt_a)
  + v_{ay}^2+jv_{ay}(\bt_a^*-\dt_a) .
\]
By using repeatedly the standard Gaussian integral
\[
  \int_\R\exp[-\pi(x^2+ax)]dx = \exp(\pi a^2/4), \qquad a\in\C,
\]
we obtain:
\beqa
  \lefteqn{\int_{\C^{mn}}
  \exp[-\pi(\vv^\H\vv+\bv^\H\Pm^{-1/2}\vv+\vv^\H\Pm^{-1/2}\bv)]d\vv}
  \non
  &=&
  \prod_{a=1}^{mn} \Big\{
  \exp[\pi(\bt_a^*+\dt_a)^2/4] \exp[-\pi(\bt_a^*-\dt_a)^2/4]
  \Big\}
  \non
  &=&
  \prod_{a=1}^{mn} \exp(\pi\bt_a^*\dt_a)
  = \exp(\pi\bvt^\H\dvt)
  = \exp(\pi\bv^\H\Pm^{-1}\dv).
  \nonumber
\eeqa
Finally, since $\Pm^{-1}=(\Am^{-1})^\T\x\Cm^{-1}$, applying Property
\ref{kron.prop.eq} we get:
\[
  \vec(\Bm)^\H((\Am^{-1})^\T\x\Cm^{-1})\vec(\Dm)
  = \tr(\Am^{-1}\Bm\Cm^{-1}\Dm^\H),
\]
which in turn leads to \eqref{ABCD.int.eq}.
\end{proof}

\begin{remark}
Ref.~\cite[eq.(124)]{mousimsen03} reports the following incorrect result:
\beq\label{mous.eq}
  \int_\R dr\exp(-rv)\int_{\j\R}\exp(rt-ut)\frac{dt}{2\pi\j} = \exp(-uv)
  \qquad u,v\in\C.
\eeq
The previous result is wrong because the function $\exp(-rv)$ is not a test
function for the $\delta$-sequence
$\delta_A(r-u)=\int_{-A}^{A}\exp(\j(r-u)t)\frac{dt}{2\pi}$
(unless $v$ is purely imaginary) since it is unbounded over the integration
domain of $r$~\cite{kanwal}.
However, if we smoothen the test function by it by a suitable Gaussian factor,
we obtain:
\beq\label{mous.corr.eq}
  \int_{\R+\j r_0} dr \int_{t_0+\j\R}\exp\Big\{-\alpha r^2+\beta t^2+rt-ut-rv\Big\}
  \frac{dt}{2\pi\j} =
  \frac{1}{\sqrt{1+4\alpha\beta}}
  \exp\bigg(\frac{-uv-\alpha u^2+\beta v^2}{1+4\alpha\beta}\bigg).
\eeq
This is a Lebesgue integral provided that $\alpha,\beta>0$ and it holds for any
$r_0,t_0\in\R$.
Therefore, we can replace eq.\ \eqref{mous.eq} by:
\beq\label{mous.corr.eq1}
  \exp(-uv) = \lim_{\alpha,\beta\downarrow0}
  \int_{\R+\j r_0} dr \int_{t_0+\j\R}\exp\Big\{-\alpha r^2+\beta t^2+rt-ut-rv\Big\}
  \frac{dt}{2\pi\j}.
\eeq
The previous result allows us to prove the following proposition.
\end{remark}

\begin{intexp} (Robust Hubbard--Stratonovich transformation)
For every pair of matrices $\Am\in\C^{m\times n},\Bm\in\C^{n\times m}$, we have:
\beq\label{hub.eq}
  \etr(-\Am\Bm) = \lim_{\alpha,\beta\downarrow0}
  \int\etr(-\alpha\Wm\Wm^\T+\beta\Zm\Zm^\T+\Wm\Zm-\Am\Wm-\Bm\Zm)d\mu(\Wm,\Zm)
\eeq
where
$d\mu(\Wm,\Zm) = \prod_{a=1}^n\prod_{b=1}^m\frac{1}{2\pi j}d(\Wm)_{ab}d(\Zm)_{ba}$
and integration is carried out along contours parallel to the real axis for each
$(\Wm)_{ab}$ and to the imaginary axis for each $(\Zm)_{ba}$.
\end{intexp}

\begin{proof}
The proof stems from repeated application of \eqref{mous.corr.eq1}.
\end{proof}

\subsection{Proof of Theorem~\ref{th.gauss}}\label{th.gauss.app}

\begin{proof}
Using the complete expansion in \eqref{totexp.eq}, we can write the CGF as:%
\footnote{
Here we used the series expansion
$(1+x)^{-1}\exp(x-x^2/2)=1+\sum_{k=3}^\infty c_kx^k
=1-\frac{1}{3}x^3+\frac{1}{4}x^4-\frac{1}{5}x^5+\frac{2}{9}x^6
-\frac{19}{84}x^7+\dots$.
}
\beqa
  g(\nu)
  &\sim&
  \nu\left[\w\z-\ln\det\begin{pmatrix}
    \Id_{\nr}+\w\Rm & \Hbt \\
    -\Hbt^\H        & \Id_{\nt}+\z\Tmt
  \end{pmatrix}\right]
  \non
  &&+
  \ln\bigg\{\int\frac{1}{(2\pi)^{\nu^2}}~
  \etr\bigg[\frac{1}{2}\Big\{\asol\Um^2-\bsol\Vm^2+2\,j\,\gsol\Um\Vm\Big\}\bigg]
  \non
  &&
  \cdot\left\{1+\sum_{k=3}^\infty c_k
  \tr\left[
  \begin{pmatrix}
    \Um\x(\Am_1\Rm) & j\Vm\x(\Bm_1\Tmt) \\
    \Um\x(\Cm_1\Rm) & j\Vm\x(\Dm_1\Tmt)
  \end{pmatrix}^k
  \right]\right\}
  d\Um d\Vm\bigg\}.
  \nonumber
\eeqa

Thus, in order to prove the convergence of $g(\nu)$ to the Gaussian CGF, we have
to show that, as $\nr\to\infty$,
\beq\label{eq1}
  \int\etr\bigg[\frac{1}{2}\Big\{\asol\Um^2-\bsol\Vm^2+2\,j\,\gsol\Um\Vm\Big\}\bigg]
  \tr\left[\begin{pmatrix}
    \Um\x(\Am_1\Rm) & j\Vm\x(\Bm_1\Tmt) \\
    \Um\x(\Cm_1\Rm) & j\Vm\x(\Dm_1\Tmt)
  \end{pmatrix}^k\right] d\Um d\Vm \to 0
\eeq
for $k\ge3$.
Now, in order to balance the asymptotic order of the matrix multipliers of $\Um$
and $\Vm$, it is convenient to apply the change of variables $\Um=\nr^{-1/2}\Umc$
and $\Vm=\nr^{1/2}\Vmc$.
Then, the limit \eqref{eq1} becomes:
\beqa\label{eq2}
  \lefteqn{\int\etr\bigg[\frac{1}{2}
  \Big\{\nr^{-1}\asol\Umc^2-\nr\bsol\Vmc^2+2\,j\,\gsol\Umc\Vmc\Big\}\bigg]}
  \non
  &&
  \cdot\tr\left[\begin{pmatrix}
    \Umc\x(\nr^{-1/2}\Am_1\Rm) & j\Vmc\x(\nr^{1/2}\Bm_1\Tmt) \\
    \Umc\x(\nr^{-1/2}\Cm_1\Rm) & j\Vmc\x(\nr^{1/2}\Dm_1\Tmt)
  \end{pmatrix}^k\right] d\Umc d\Vmc \stackrel{\nr\to\infty}{\longrightarrow} 0
\eeqa
and since
\beqa
  \asol &=& \tr[(\Ac\Rm)^2] \le \tr(\Rm^2) = O(\nr), \non
  \bsol &=& \tr[(\Dc\Tmt)^2] \le \tr(\Tmt^2) = O(\nr^{-1}), \nonumber
\eeqa
by the theorem's assumptions, it is plain to see that condition \eqref{eq2} is
equivalent to the limit
\beq\label{term.k.eq}
  \tr\left[\begin{pmatrix}
    \Umc\x(\nr^{-1/2}\Am_1\Rm) & j\Vmc\x(\nr^{1/2}\Bm_1\Tmt) \\
    \Umc\x(\nr^{-1/2}\Cm_1\Rm) & j\Vmc\x(\nr^{1/2}\Dm_1\Tmt)
  \end{pmatrix}^k\right]
  \stackrel{\nr\to\infty}{\longrightarrow} 0
\eeq
To prove \eqref{term.k.eq}, we investigate the asymptotic order of the singular
values of the matrices $\Am_1\Rm$, $\Bm_1\Tmt$, $\Cm_1\Rm$, and $\Dm_1\Tmt$.
We shall use the following linear algebra inequalities~\cite{hornjohnson}:
\[
\begin{array}{rcll}
  \sigma_{\max}(\Mm\Nm)
  &\le&
  \sigma_{\max}(\Mm)\sigma_{\max}(\Nm),
  & \text{for any $\Mm,\Nm$ with compatible sizes}
  \\
  \sigma_{\max}(\Mm)
  &\le&
  \sigma_{\max}(\Nm),
  & \text{for any square $\Mm,\Nm$ with $\Mm\le\Nm$}
  \\
  \tr(\Mm)
  &\le&
  \tr(\Nm),
  & \text{for any square $\Mm,\Nm$ with $\Mm\le\Nm$}
\end{array}
\]
From the asymptotic setting definitions of Section \ref{asy.set.sec} we have
the following results.
\ben
\item
Since
\[
  \Am_1\Rm
  = [\w\Id_{\nr}+\Rm^{-1}+\Rm^{-1}\Hbt(\Id_{\nt}+\z\Tmt)^{-1}\Hbt^\H]^{-1},
\]
we have $\Am_1\Rm\le\Rm$, so that
\[
  \sigma_{\max}(\Am_1\Rm) \le \sigma_{\max}(\Rm) = \Theta(1)
  \qquad\text{and}\qquad
  \z = \tr(\Am_1\Rm) \le \tr(\Rm) = \Theta(\nr).
\]
\item
Since
\[
  \Dm_1\Tmt
  = [\z\Id_{\nt}+\Tmt^{-1}+\Tmt^{-1}\Hbt^\H(\Id_{\nr}+\w\Rm)^{-1}\Hbt]^{-1},
\]
we have $\Dm_1\Tmt\le\Tmt$, so that
\[
  \sigma_{\max}(\Dm_1\Tmt) \le \sigma_{\max}(\Tmt) = \Theta(\nr^{-1})
  \qquad\text{and}\qquad
  \w = \tr(\Dm_1\Tmt) \le \tr(\Tmt) = \Theta(1).
\]
\item
By using the previous inequalities on the maximum singular values we obtain:
\[
  \tr[(\Am_1\Rm)^k] = O(\nr)
  \qquad\text{and}\qquad
  \tr[(\Dm_1\Tmt)^k] = O(\nr^{1-k})
\]
which confirms, for $k=2$, that $\asol=O(\nr)$ and $\bsol=O(\nr^{-1})$.

\item
Setting $\Hbh\triangleq(\Id_\nr+\w\Rm)^{-1/2}\Hbt$, by the definition of $\Bm_1$
in \eqref{ABCD.eq}, we have:
\beqa
  \sigma_{\max}(\Bm_1\Tmt)
  &=&
  \sigma_{\max}[(\Id_\nr+\w\Rm)^{-1/2}\Hbh(\Id_\nt+\z\Tmt+\Hbh^\H\Hbh)^{-1}\Tmt]
  \non
  &\le&
  \sigma_{\max}[(\Id_\nr+\w\Rm)^{-1/2}]
  \sigma_{\max}[\Hbh(\Id_\nt+\z\Tmt+\Hbh^\H\Hbh)^{-1}]
  \sigma_{\max}(\Tmt).
  \nonumber
\eeqa
Since (plainly) $\sigma_{\max}[(\Id_\nr+\w\Rm)^{-1/2}]=O(1)$ and
\beqa
  \sigma_{\max}^2[\Hbh(\Id_\nt+\z\Tmt+\Hbh^\H\Hbh)^{-1}]
  &=&
  \lambda_{\max}[\Hbh^\H\Hbh(\Id_\nt+\z\Tmt+\Hbh^\H\Hbh)^{-2}]
  \non
  &\le&
  \lambda_{\max}[\Hbh^\H\Hbh(\Id_\nt+\Hbh^\H\Hbh)^{-2}]
  \non
  &=&
  \max_i\frac{\lambda_i(\Hbh^\H\Hbh)}{[1+\lambda_i(\Hbh^\H\Hbh)]^2}
  \le \frac{1}{4},
  \nonumber
\eeqa
we have
\[
  \sigma_{\max}(\Bm_1\Tmt) = O(\nr^{-1}).
\]

\item
Setting $\Hbh\triangleq\Hbt(\Id_\nt+\z\Tmt)^{-1/2}$, by the definition of $\Cm_1$
in \eqref{ABCD.eq}, we have:
\beqa
  \sigma_{\max}(\Cm_1\Rm)
  &=&
  \sigma_{\max}[(\Id_\nt+\z\Tmt)^{-1/2}\Hbh^\H(\Id_\nr+\w\Rm+\Hbh\Hbh^\H)^{-1}\Rm]
  \non
  &\le&
  \sigma_{\max}[(\Id_\nt+\z\Tmt)^{-1/2}]
  \sigma_{\max}[\Hbh^\H(\Id_\nr+\w\Rm+\Hbh\Hbh^\H)^{-1}]
  \sigma_{\max}(\Rm).
  \nonumber
\eeqa
Since (plainly) $\sigma_{\max}[(\Id_\nr+\z\Tmt)^{-1/2}]=O(1)$ and
\beqa
  \sigma_{\max}^2[\Hbh^\H(\Id_\nr+\w\Rm+\Hbh\Hbh^\H)^{-1}]
  &=&
  \lambda_{\max}[\Hbh\Hbh^\H(\Id_\nr+\w\Rm+\Hbh\Hbh^\H)^{-2}]
  \non
  &\le&
  \lambda_{\max}[\Hbh\Hbh^\H(\Id_\nr+\Hbh\Hbh^\H)^{-2}]
  \non
  &=&
  \max_i\frac{\lambda_i(\Hbh\Hbh^\H)}{[1+\lambda_i(\Hbh\Hbh^\H)]^2}
  \le \frac{1}{4},
  \nonumber
\eeqa
we have
\[
  \sigma_{\max}(\Cm_1\Rm) = O(1).
\]
\een
In order to prove \eqref{term.k.eq}, notice that the maximum singular values of
the matrices on the rhs of the Kronecker products in
\[
  \begin{pmatrix}
    \Um\x(\nr^{-1/2}\Am_1\Rm) & j\Vm\x(\nr^{1/2}\Bm_1\Tmt) \\
    \Um\x(\nr^{-1/2}\Cm_1\Rm) & j\Vm\x(\nr^{1/2}\Dm_1\Tmt)
  \end{pmatrix}
\]
are all $O(\nr^{-1/2})$ as $\nr\to\infty$.
Thus, if we define
\[
  \begin{pmatrix}
    \phi_{11,k}(\Um,\Vm)\x\Mm_{11,k} & \phi_{12,k}(\Um,\Vm)\x\Mm_{12,k} \\
    \phi_{21,k}(\Um,\Vm)\x\Mm_{21,k} & \phi_{22,k}(\Um,\Vm)\x\Mm_{22,k}
  \end{pmatrix}
  \triangleq
  \begin{pmatrix}
    \Um\x(\nr^{-1/2}\Am_1\Rm) & j\Vm\x(\nr^{1/2}\Bm_1\Tmt) \\
    \Um\x(\nr^{-1/2}\Cm_1\Rm) & j\Vm\x(\nr^{1/2}\Dm_1\Tmt)
  \end{pmatrix}^k
\]
we have $\sigma_{\max}(\Mm_{ij,k})=O(\nr^{-k/2})$.
Therefore, the trace
\[
  \tr(\phi_{11,k}(\Um,\Vm))\tr(\Mm_{11,k})+
  \tr(\phi_{22,k}(\Um,\Vm))\tr(\Mm_{22,k}) = O(\nr^{1-k/2})
\]
approaches $0$ as $\nr\to\infty$ for all $k\ge3$, which proves \eqref{eq1} and
the fact that $g(\nu)$ converges to a Gaussian CGF.

\end{proof}

\subsection{Uniqueness of the solution of eqs.\ \eqref{wz.eq}}
\label{wz.uniq.app}

The proof reported here was inspired by a similar proof included
in~\cite{hac_pastur06}.

\begin{proof}
Eqs.\ \eqref{wz.eq} can be written equivalently as follows:
\beq\label{wz.eq1}
\left\{\begin{array}{lllll}
  1 &=& \psh_1(w,z)
  &\triangleq&
  \dst\frac{\tr(\Dm_1\Tmt)}{w} =
  \tr\Big\{
  [wz\Id_{\nt}+w\Tmt^{-1}+w\Tmt^{-1}\Hbt^\H(\Id_{\nr}+w\Rm)^{-1}\Hbt]^{-1}
  \Big\} \\[3mm]
  1 &=& \psh_2(w,z)
  &\triangleq&
  \dst\frac{\tr(\Am_1\Tmt)}{z} =
  \tr\Big\{
  [wz\Id_{\nr}+z\Rm^{-1}+z\Rm^{-1}\Hbt(\Id_{\nt}+z\Tmt)^{-1}\Hbt^\H]^{-1}
  \Big\}
\end{array}\right.
\eeq
First, we calculate the partial derivatives of these functions with respect to
$w$ and $z$:
\beq\label{ps1.der.eq}
\left\{\begin{array}{lll}
  \dst\frac{\partial\psh_1}{\partial w}
  &=&
  \dst-\frac{\tr\{\Dm_1\Tmt\Dm_1
  [\Id_{\nt}+z\Tmt+\Hbt^\H(\Id_{\nr}+w\Rm)^{-2}\Hbt]\}}{w^2}<0
  \\
  \dst\frac{\partial\psh_1}{\partial z}
  &=&
  \dst-\frac{\tr\{(\Dm_1\Tmt)^2\}}{w}<0
  \\
  \dst\frac{\partial\psh_2}{\partial w}
  &=&
  \dst-\frac{\tr\{(\Am_1\Rm)^2\}}{z}<0
  \\
  \dst\frac{\partial\psh_2}{\partial z}
  &=&
  \dst-\frac{\tr\{\Am_1\Rm\Am_1
  [\Id_{\nr}+w\Rm+\Hbt(\Id_{\nt}+z\Tmt)^{-2}\Hbt^\H]\}}{z^2}<0
\end{array}\right.
\eeq
Then, we notice that $\psh_1(0,z)=\infty$, $\psh_1(\infty,z)=0$, and
$\partial\psh_1/\partial w<0$.
Hence, $\psh_1(w,z)$ is a continuous monotonically decreasing function in $w$
and the equation $\psh_1(w,z)=1$ has a single solution $w=g(z)\in[0,\infty)$,
which is a continuous function of $z$ by the {\em Implicit Function
Theorem}~\cite[Th.\ 3.16]{knapp}.
Moreover, from the first of \eqref{wz.eq1} and from the first two inequalities
of \eqref{ps1.der.eq}, we obtain:
\[
  g'(z) = -\frac{\partial\psh_1/\partial z}{\partial\psh_1/\partial w} < 0.
\]
Then, the uniqueness of the solution of \eqref{wz.eq} requires that the equation
$h(z) \triangleq \psh_2(g(z),z) = 1$ has a single solution.
This result stems from the fact that $\psh_2(g(0),0)=+\infty$,
$\psh_2(g(\infty),\infty)=0$, and
\[
  h'(z) = \bigg[\frac{\partial\psh_2}{\partial w}g'(z)+
  \frac{\partial\psh_2}{\partial z} \bigg]_{w=g(z)} < 0.
\]
The last inequality is equivalent to
\[
  \bigg[
  \frac{\partial\psh_1}{\partial w}\frac{\partial\psh_2}{\partial z} -
  \frac{\partial\psh_1}{\partial z}\frac{\partial\psh_2}{\partial w}
  \bigg]_{w=g(z)}> 0
\]
which can be checked by direct substitution of the relevant expressions.
\end{proof}

\subsection{Saddlepoint approximation}
\label{saddle.app}

Here we report some basic facts about saddlepoint approximation.
A full account on the subject can be found in \cite{jensen}.
In its simplest form, saddlepoint approximation deals with integrals of the
type
\[
  I = \int_\R g(x)\exp[-\lambda h(x)]dx
\]
in the limit for $\lambda\to\infty$.
Assuming $g(x)$ bounded and $h(x)$ with a global minimum at $x=x_0$, the
integral can be approximated, for $\lambda\to\infty$, as:
\[
  I \approx \int_\R g(x_0)\exp[-\lambda h(x_0)-\lambda h''(x_0)(x-x_0)^2/2]dx
  = \sqrt\frac{2\pi}{\lambda h''(x_0)}g(x_0)\exp[-\lambda h(x_0)].
\]
A more refined derivation yields
\[
  I = \frac{\exp[-\lambda h(x_0)]}{\sqrt{\lambda h''(x_0)/(2\pi)}}\bigg\{
  g(x_0)+\frac{g(5h'''-3h''^2h^{iv})-12g'h''h'''+12g''h''^2]}
  {24h''^3}\bigg|_{x=x_0}\frac{1}{\lambda}+O\bigg(\frac{1}{\lambda^2}\bigg)\bigg\}.
\]
This result can be extended to a multidimensional contour integral
\[
  I = \int g(\xv)\exp[-\lambda h(\xv)]d\xv.
\]
Assuming $|g(\xv)|$ bounded and $h(\xv)$ real, smooth, and with a global minimum
at $\xv=\xv_0$, we have, for $\lambda\to\infty$:
\[
  I \approx
  \int g(\xv_0)\exp[-\lambda h(\xv_0)-\lambda(\xv-\xv_0)^\T\Hm_0(\xv-\xv_0)/2]d\xv
  = \frac{g(\xv_0)\exp[-\lambda h(\xv_0)]}{\sqrt{\det[\lambda\Hm_0/(2\pi)]}},
\]
where $\Hm_0$ is the Hessian matrix of $h(\xv)$ at $\xv=\xv_0$, which is assumed
to be positive definite.

\subsection{Proof of the inequalities $0<\gsol^2-\asol\bsol<1$}
\label{variance.app}

\begin{proof}
First, we show that $0<\gsol<1$, so that, since $\asol\bsol>0$, we have
$\gsol^2-\asol\bsol<1$.
\bit
\item
The inequality $\gsol<1$ derives from
\[
  \gsol = 1+\tr(\Bc\Tmt\Cc\Rm)
        = 1-\tr(\Rm^{1/2}\Cc^\H\Tmt\Cc\Rm^{1/2})
        = 1-\|\Tmt^{1/2}\Cc\Rm^{1/2}\|^2.
\]
\item
The inequality $\gsol>0$ derives from
\beqa
  \gsol
  &=&
  1+\tr(\Bc\Tmt\Cc\Rm)
  \non
  &=&
  1-\tr[\Ac\Hbt(\Id_\nt+\z\Tmt)^{-1}\Tmt(\Id_\nt+\z\Tmt)^{-1}\Hbt^\H\Ac\Rm]
  \non
  &\stackrel{(i)}{\ge}&
  1-\z^{-1}\tr[\Ac\Hbt(\Id_\nt+\z\Tmt)^{-1}\Hbt^\H\Ac\Rm]
  \non
  &\stackrel{(ii)}{=}&
  1-\z^{-1}\tr[(\Id_\nr-\Ac(\Id_\nr+\w\Rm))\Ac\Rm]
  \non
  &\stackrel{(iii)}{=}&
  \frac{\tr(\Ac^2\Rm)+\w\asol}{\z},
  \nonumber
\eeqa
from
$(i)$ the matrix inequality $(\Id_\nt+\z\Tmt)^{-1}\Tmt<\z^{-1}\Id_\nt$;
$(ii)$ the definition of $\Ac$; and
$(iii)$ the definition of $\asol=\tr(\Ac\Rm)$.
Now, in order to handle the case of singular $\Rm$, we define the positive
definite matrix $\Rm_\epsilon\triangleq\epsilon\Id_\nr+\Rm$ for $\epsilon>0$.
We have:
\beqa\label{hard.ineq}
  \gsol
  &\ge&
  \frac{\tr(\Ac^2\Rm_\epsilon)-\epsilon\tr(\Ac^2)+\w\asol}{\z}
  \non
  &\stackrel{(i)}{\ge}&
  \frac{\lambda_{\min}(\Rm_\epsilon^{-1})\tr(\Ac\Rm_\epsilon\Ac\Rm_\epsilon)
  -\epsilon\tr(\Ac^2)+\w\asol}{\z}
  \non
  &\stackrel{(ii)}{\to}&
  \frac{\lambda_{\max}(\Rm)^{-1}+\w}{\z}\asol
  \non
  &>& 0,
\eeqa
from
$(i)$ the inequality $\tr(\Um\Vm)\ge\lambda_{\min}(\Um)\tr(\Vm)$ (holding for
nonnegative definite matrices $\Um,\Vm$ as a direct consequence
of~\cite[9.H.1.h]{marshallolkin}); and
$(ii)$ taking the limit for $\epsilon\downarrow0$ of the previous lower bound
and the limit
$\lim_{\epsilon\downarrow0}\lambda_{\min}(\Rm_\epsilon)=\lambda_{\max}(\Rm)^{-1}$.
\eit
The second part of the proof can be accomplished by combining the inequality
\[
  \gsol \ge \frac{\lambda_{\max}(\Rm)^{-1}+\w}{\z}\asol > \frac{\w}{\z}\asol,
\]
deriving from \eqref{hard.ineq}, and
\[
  \gsol \ge \frac{\lambda_{\max}(\Tmt)^{-1}+\z}{\w}\bsol > \frac{\z}{\w}\bsol,
\]
which can be obtained in a similar way.
Multiplying the two inequalities yields $\gsol^2-\asol\bsol>0$.
\end{proof}

\newpage

\newpage
\def\figsz{0.65}
\insertfig{\figsz}{new_mean_vs_antennas}{Capacity mean versus number of antennas for different SNR's.}
\insertfig{\figsz}{new_stdev_vs_antennas}{Capacity standard deviation versus number of antennas for different SNR's.}
\clearpage
\insertfig{\figsz}{new_mean_vs_tx2rx_antennas}{Capacity mean versus the TX to RX antenna ratio for different SNR's.}
\insertfig{\figsz}{new_stdev_vs_tx2rx_antennas}{Capacity standard deviation versus the TX to RX antenna ratio for different SNR's.}
\clearpage
\insertfig{\figsz}{new_mean_vs_k}{Capacity mean versus the Rice factor $K$ (expressed in dB) for different SNR's.}
\insertfig{\figsz}{new_stdev_vs_k}{Capacity standard deviation versus the Rice factor $K$ (expressed in dB) for different SNR's.}
\clearpage
\insertfig{\figsz}{new_mean_vs_alpha}{Capacity mean versus exponential spatial correlation base $\alpha$ for different SNR's.}
\insertfig{\figsz}{new_stdev_vs_alpha}{Capacity standard deviation versus exponential spatial correlation base $\alpha$ for different SNR's.}
\clearpage
\insertfig{\figsz}{outcap_vs_antennas}{Outage capacity versus number of antennas for different SNR's.}
\insertfig{\figsz}{outcap_vs_alpha}{Outage capacity versus exponential spatial correlation base $\alpha$ for different SNR's.}


\begin{thebibliography}{00}

\bibitem{BH86}
N.\ Bleistein and R.A.\ Handelsman, {\em Asymptotic Expansions of Integrals }.
Dover, 1986

\bibitem{cover}
T.M.\ Cover and J.A.\ Thomas, {\em Elements of Information Theory (2nd Ed.).}
New York: Wiley, 2006.

\bibitem{cui05}
X.W.\ Cui, Q.T.\ Zhang, and Z.M.\ Feng, ``Generic procedure for tightly bounding
the capacity of MIMO correlated Rician fading channels,'' {\em IEEE Trans.\ on
Commun.}, vol.\ 53, no.\ 5, pp.\ 890--898, May 2005.


\bibitem{driefos99}
P.F.\ Driessen and G.J.\ Foschini, ``On the capacity formula for multiple
input–multiple output wireless channels: A geometric interpretation,''
{\em IEEE Trans.\ on Communications,} vol.\ 47, no.\ 2, pp.\ 173--176,
Feb.\ 1999.

\bibitem{dum05a}
J.\ Dumont, P.\ Loubaton, S.\ Lasaulce, and M.\ Debbah, ``On the asymptotic
performance of MIMO correlated Rician channels,'' {\em ICASSP 2005},
pp. V-813--816, Philadelphia, PA, USA, March 18--23, 2005.

\bibitem{dum06}
J.\ Dumont, P.\ Loubaton, S.\ Lasaulce, ``On the capacity achieving transmit
covariance matrices of MIMO correlated Rician channels: A large system
approach,'' Proc.\ {\em Globecom 2006}, San Francisco, Nov.\ 27-Dec.\ 1, 2006.

\bibitem{edwand75}
S.F.\ Edwards and P.W.\ Anderson, ``Theory of spin glasses,'' {\em J. Phys. F:
Metal Physics}, vol.\ 5, pp.\ 965--974, 1975.

\bibitem{eltar03}
M.\ Elia and G.\ Taricco, ``Integration of the exponential function of a complex
quadratic form,'' {\em Applied Mathematics E-Notes}, vol.\ 3, pp.\ 95-98, 2003.

\bibitem{farro01}
F.R.\ Farrokhi, G.J.\ Foschini, A.\ Lozano, and R.A.\ Valenzuela, ``Link-optimal
space–time processing with multiple transmit and receive antennas,'' {\em IEEE
Commun.\ Lett.}, vol.\ 5, no.\ 3, pp.\ 85--87, March 2001.

\bibitem{fischerhertz}
K.H.\ Fischer and J.A.\ Hertz, {\em Spin Glasses}. Cambridge University Press, 1991.

\bibitem{foschini96}
G.J.\ Foschini, ``Layered space-time architecture for wireless communication in
fading environments when using multi-element antennas,'' {\em Bell Labs Tech.\
J.}, pp.\ 41--59, 1996.

\bibitem{gesb02}
D.\ Gesbert, H.\ B\"olcskei, D.A.\ Gore, and A.J.\ Paulraj, ``Outdoor MIMO
wireless channels: Models and performance prediction,'' {\em IEEE Trans.\
Commun.}, vol.\ 50, no.\ 12, pp.\ 1926--1934, Dec.\ 2002.

\bibitem{gold03}
A.J.\ Goldsmith, S.A.\ Jafar , N.\ Jindal, and S.\ Vishwanath, ``Capacity Limits
of MIMO Channels,'' {\em IEEE J.\ Select.\ Areas Commun.}, vol.\ 21, no.\ 5,
pp.\ 684--702, June 2003.

\bibitem{guoverdu05}
D.\ Guo and S.\ Verd\'u, ``Randomly spread CDMA: Asymptotics via statistical
physics,'' {\em IEEE Trans.\ Inform.\ Theory}, vol.\ 51, no.\ 6, pp.\
1983--2010, June 2005.

\bibitem{haclounaj06}
W.\ Hachem, P.\ Loubaton, and J.\ Najim, ``The mutual information of a MIMO
channel: A survey,'' {\em ISCCSP 2006}.

\bibitem{haclounaj07}
W.\ Hachem, P.\ Loubaton, and J.\ Najim, ``Deterministic equivalents for certain
functionals of large random matrices,'' {\em Annals of Applied Probability},
vol.\ 17 no.\ 3, pp. 875--930, 2007.

\bibitem{hac_pastur06}
W.\ Hachem, O.\ Khorunzhiy, P.\ Loubaton, J.\ Najim, and L.\ Pastur, ``A New
Approach for Capacity Analysis of Large Dimensional Multi-Antenna Channels,''
submitted to {\em IEEE Transactions on Information Theory}, December 2006.

\bibitem{hanboe04}
J.\ Hansen and H.\ B\"olcskei, ``A geometrical investigation of the rank-1
Rician MIMO channel at high SNR,'' {\em ISIT 2004}, Chicago, USA, June 27 – July
2, 2004.

\bibitem{hoekimlap05}
D.\ H\"osli, Y.-H.\ Kim, and A.\ Lapidoth, ``Monotonicity results for coherent
MIMO Rician Channels,'' {\em IEEE Trans.\ Inform.\ Theory},
vol.\ 51, no.\ 12, pp.\ 4334--4339, Dec.\ 2005.

\bibitem{hornjohnson}
R.\ Horn and C.\ Johnson, {\em Matrix Analysis}. New York: Cambridge University
Press, 1985.

\bibitem{jagold04}
S.A.\ Jafar and A.\ Goldsmith, ``Transmitter optimization and optimality of
beamforming for multiple antenna systems,'' \emph{{} Trans. Wireless
Commun.}, vol.~3, no.~4, pp. 1165--1175, July 2004.

\bibitem{jaypoor05}
S.K.\ Jayaweera and H.V.\ Poor, ``On the capacity of multiple-antenna systems in
Rician fading,'' {\em IEEE Transactions on Wireless Communications,} vol.\ 4,
no.\ 3, pp.\ 1102--1111, March 2005.

\bibitem{jensen}
J.L.\ Jensen, {\em Saddlepoint Approximations}. New York: Oxford University
Press, 1995.

\bibitem{jorsboc} E.\ Jorswieck and H.\ Boche, ``Channel capacity and
capacity-range of beamforming in MIMO wireless systems under correlated fading
with covariance feedback,'' {\em IEEE Trans.\ Wireless Commun.}, vol.\ 3, no.\
5, pp.\ 1543--1553, Sept.\ 2004.

\bibitem{kangalou06}
M.\ Kang and M.-S.\ Alouini, ``Capacity of MIMO Rician channels,'' {\em IEEE
Transactions on Wireless Communications,} vol.\ 5, no.\ 1, pp.\ 112--122,
January 2006.

\bibitem{kanwal}
R.P.\ Kanwal, {\em Generalized Functions: Theory and Techniques}.
New York: Academic Press, 1983.

\bibitem{knapp}
A.W.\ Knapp, {\em Basic Real Analysis}. Boston: Birkh\"auser, 2005.

\bibitem{lebfashasmi06}
G.\ Lebrun, M.\ Faulkner, M.\ Shafi, and P.J.\ Smith, ``MIMO Ricean channel
capacity; An asymptotic analysis,'' {\em IEEE Transactions on Wireless
Communications,} vol.\ 5, no.\ 6, pp.\ 1343--1350, June 2006.

\bibitem{marshallolkin}
A.W.\ Marshall and I.\ Olkin, {\em Inequalities: Theory of Majorization and Its
Application}. London, U.K.: Academic, 1979.

\bibitem{martott04}
C.\ Martin and B.\ Ottersten, ``Asymptotic eigenvalue distributions and capacity
for MIMO channels under correlated fading,'' {\em IEEE Transactions on Wireless
Communications}, vol.\ 3, no.\ 4, pp.\ 1350–1359, July 2004.

\bibitem{mckay1}
M.R.\ McKay and I.B.\ Collings, ``General capacity bounds for spatially
correlated Rician MIMO channels,'' {\em IEEE Trans.\ Inform.\ Theory},
vol.\ 51, no.\ 9, pp.\ 3121--3145, Sept.\ 2005.

\bibitem{mckay2}
M.R.\ McKay and I.B.\ Collings, ``Improved general lower bound for
spatially-correlated Rician MIMO capacity,'' {\em IEEE Commun.\ Letters},
vol.\ 10, no.\ 3, pp.\ 162--164, March 2006.

\bibitem{mckay3}
M.R.\ McKay, P.J.\ Smith, and I.B.\ Collings, ``New Properties of Complex
Noncentral Quadratic Forms and Bounds on MIMO Mutual Information,'' Proc.\
{\em IEEE ISIT 2006}, Seattle, USA, July 2006, pp. 1209--1213.


\bibitem{mousimsen03}
A.L.\ Moustakas, S.H.\ Simon, A.M.\ Sengupta, ``MIMO capacity through correlated
channels in the presence of correlated interferers and noise: A (not so) large
$N$ analysis,'' {\em IEEE Trans.\ Inform.\ Theory}, vol.\ 49, no.\ 10, pp.\
2545--2561, Oct.\ 2003.

\bibitem{mousim05}
A.L.\ Moustakas, S.H.\ Simon, ``Random matrix theory of multi-antenna
communications: The Rician channel", {\em J.\ Phys.\ A: Math.\ Gen.}, vol.\ 38,
pp.\ 10859--10872, Nov.\ 2005.

\bibitem{mousim07}
A.L.\ Moustakas, S.H.\ Simon, ``On the Outage Capacity of Correlated
Multiple-Path MIMO Channels,'' {\em IEEE Trans.\ Inform.\ Theory}, vol.\ 53,
no.\ 11, pp.\ 3887--3903, Nov. 2007.

\bibitem{muller03}
R.R.\ M\"uller, ``Channel capacity and minimum probability of error in large
dual antenna array systems with binary modulation,'' {\em IEEE Trans.\ Sig.\
Proc.}, vol.\ 51, no.\ 11, pp.\ 2821--2828, Nov.\ 2003.

\bibitem{mulger04}
R.R.\ M\"uller and W.H.\ Gerstacker, ``On the capacity loss due to separation of
detection and decoding,'' {\em IEEE Trans.\ Inf.\ Theory}, vol.\ 50, no.\ 8,
pp.\ 1769--1778, Aug.\ 2004.

\bibitem{nishi2001}
H.\ Nishimori, {\em Statistical physics of spin glasses and information
processing: An introduction}. Oxford Univ.\ Press, 2001.


\bibitem{RiegTar07a}
E.\ Riegler and G.\ Taricco, ``Second-order statistics of the mutual information
of the asymptotic separately-correlated Rician fading MIMO channel with
interference,'' Proc.\ {\em Globecom 2007}, Washington, DC, USA, November
26--30, 2007.

\bibitem{RiegTar07b}
E.\ Riegler and G.\ Taricco, ``On the ergodic capacity region of the separately
correlated Rician fading multiple access MIMO channel,'' Proc.\ {\em Globecom
2007}, Washington, DC, USA, November 26--30, 2007.

\bibitem{sayeed02}
A.M.\ Sayeed, ``Deconstructing multiantenna fading channels,'' {\em IEEE Trans.\
Sig.\ Proc.}, vol.\ 50, no.\ 10, pp.\ 2563--2579, Oct.\ 2002.

\bibitem{tanaka02}
T.\ Tanaka, ``A statistical-mechanics approach to large-system analysis of CDMA
multiuser detectors,'' {\em IEEE Trans.\ Inform.\ Theory} vol.\ 48, no.\ 11,
pp.\ 2888--910, Nov.\ 2002.

\bibitem{Taricco06GC}
G.\ Taricco, ``On the capacity of separately-correlated MIMO Rician fading
channels,'' Proc.\ {\em Globecom 2006}, San Francisco, USA, Nov.\ 27-Dec.\ 1,
2006.

\bibitem{TarRieg07}
G.\ Taricco and E.\ Riegler, ``On the ergodic capacity of the asymptotic
separately-correlated Rician fading MIMO channel with interference,'' in
{\em IEEE ISIT 2007}, Nice, France, June 24--30, 2007.

\bibitem{telatar}
I.\ E.\ Telatar, ``Capacity of multi-antenna Gaussian channels,'' {\em AT\&T
Bell Laboratories Technical Report,} BL0112170-950615-07TM, June 1995.
Also published as ``Capacity of multi-antenna Gaussian channels,'' {\em European
Trans.\ Telecomm.}, vol.\ 10, no.\ 6, pp.\ 585--595, November--December 1999.


\bibitem{venka03}
S.\ Venkatesan, S.H.\ Simon, and R.A.\ Valenzuela, ``Capacity of a Gaussian MIMO
Channel with nonzero mean,'' {\em Proc. of IEEE Vehicular Technology
Conference}, Oct.\ 2003.

\bibitem{VuPaul05}
M.\ Vu and A.\ Paulraj, ``Capacity optimization for Rician correlated MIMO
wireless channels,'' {\em Asilomar Conference 2005}, ASilomar CA, USA, Nov.\
2005.

\bibitem{wenwong06}
C.-K.\ Wen and K.-K.\ Wong, ``Asymptotic analysis of spatially correlated MIMO
multiple-access channels with arbitrary signaling inputs for joint and separate
decoding,'' submitted to {\em IEEE Trans.\ on Information Theory}, 2006.

\bibitem{winters}
J.\ Winters, ``On the capacity of radio communication systems with diversity in
a Rayleigh fading environment,'' {\em IEEE J.\ Select.\ Areas Commun.}, vol.\ 5,
no.\ 6, pp. 871--878, June 1987.

\end{thebibliography}
\end{document}